\documentclass{article}

\usepackage{arxiv}
\usepackage{amsthm}
\usepackage[utf8]{inputenc} 
\usepackage[T1]{fontenc}    
\usepackage{hyperref}       
\usepackage{url}            
\usepackage{booktabs}       
\usepackage{amsfonts}       
\usepackage{nicefrac}       
\usepackage{microtype}      
\usepackage{lipsum}
\usepackage{graphicx}
\graphicspath{ {./images/} }
\usepackage{appendix}

\usepackage{amssymb}
\usepackage{pifont}
\RequirePackage{amsmath,amsfonts,amssymb}
\RequirePackage{amsmath}

\RequirePackage{amsmath}
\newtheorem{definition}{Definition}
\newtheorem{proposition}{Proposition}
\newtheorem{assumption}{Assumption}
\newtheorem{theorem}{Theorem}
\newtheorem{corollary}{Corollary}

\RequirePackage{times}
\RequirePackage{bm}

\RequirePackage{amsmath}

\RequirePackage{hyperref} 
\hypersetup{colorlinks,
citecolor=blue,
linkcolor=blue,
urlcolor=blue
}

\usepackage{thmtools}
\RequirePackage{float}
\RequirePackage{perpage}
\MakeSorted{figure}
\MakeSorted{table}

\RequirePackage{comment}

\newtheorem{example}{Example}

\usepackage{booktabs}

\newcommand{\mathbbm}[1]{\text{\usefont{U}{bbm}{m}{n}#1}} 

\usepackage{stackengine}
\def\defeq{\mathrel{\ensurestackMath{\stackon[1pt]{=}{\scriptscriptstyle\Delta}}}}
\RequirePackage{url}
\RequirePackage{chapterbib}

\RequirePackage{color}
\RequirePackage{tikz}
\tikzset{%
mynode/.style={circle,minimum width=.5ex, fill=none,draw}, 
myfillnode/.style={circle,minimum width=.5ex, fill=lightgray,draw}, 
}
\RequirePackage{amssymb}

\newcommand{\indep}{\perp \!\!\! \perp}
\RequirePackage{amsmath}               
\RequirePackage{lscape}
\RequirePackage{color}
\RequirePackage{tikz}
\RequirePackage{times}
\usepackage{pifont}
\newcommand{\xmark}{\ding{55}}
\RequirePackage{bm}
\usepackage{natbib}
\newenvironment{customeg}[1]
  {\renewcommand{\theexample}{#1}
   \begin{example}}
  {\end{example}}

\usepackage{soul}
\newcommand{\jin}[1]{\textcolor{blue}{[\textbf{Jin}: #1]}}
\newcommand{\jina}[1]{\textcolor{blue}{#1}}
\newcommand{\yuta}[1]{\textcolor{red}{#1}}
\newcommand{\removed}[1]{}
\def\bSig\mathbf{\Sigma}

\title{Cumulative Natural Direct and Indirect Effects\\ for Causal Mediation Analysis}

\author{
Yuta Kawakami \\
Mohamed bin Zayed University of Artificial Intelligence\\
    \And
 Jin Tian \\
Mohamed bin Zayed University of Artificial Intelligence\\
}

\begin{document}
\maketitle
\begin{abstract}
Causal mediation analysis provides a fundamental framework for quantifying the contributions of different pathways from a treatment $X$ to an outcome $Y$ through a mediator. The natural direct and indirect effects (NDE and NIE) are widely used to decompose the total effect. In this paper, we observe that NDE and NIE can give rise to paradoxical interpretations due to their failure to satisfy two desirable properties of interpretable causal effects: skew-symmetry and additivity. To address these limitations, we introduce new measures of direct and indirect effects for continuous treatments, termed the cumulative natural direct and indirect effects (CNDE and CNIE), constructed by decomposing local causal effects $\mathbb{E}[\partial_xY_{x}]$ into local direct and indirect effects. CNDE and CNIE yield a decomposition of the total effect that preserves both skew-symmetry and additivity. We further extend this framework to ordinal treatments by defining discrete analogues of the cumulative effects over ordered treatment levels that preserve these structural properties. We establish decomposition and identification results for the proposed measures under standard causal assumptions. We illustrate their behavior, in comparison with NDE and NIE, using canonical linear mediation models with interaction and a real-world dataset.
\end{abstract}


\section{Introduction}

Causal mediation analysis provides a fundamental framework for  revealing the strength of different pathways linking a treatment \(X\) to an outcome \(Y\) through a mediator \(M\) \citep{Wright1921,Wright1934,Baron1986,Imai2010a,Imai2010b,TchetgenTchetgen2012}. It constitutes a central topic in modern causal inference \citep{Pearl09,Vanderweele2015,Hernan2023}. Within this framework, the total effect  of a treatment is decomposed into direct and indirect components, capturing pathways that bypass or operate through the mediator. In particular, the natural direct and indirect effects (NDE and NIE), grounded in the potential outcomes and structural causal model frameworks \citep{Robins1992,Pearl2001},  provide a principled decomposition of the total effect. 

The total effect (TE),
$
\text{TE}(x'',x';Y) = \mathbb{E}[Y_{x''}] - \mathbb{E}[Y_{x'}],
$ 
satisfies two fundamental properties:
\begin{equation}
\begin{aligned}
&\text{\bf Skew-symmetry:} \quad 
\text{TE}(x'',x';Y) = -\text{TE}(x',x'';Y),
\end{aligned}
\end{equation}
\begin{equation}
\begin{aligned}
&\text{\bf Additivity:} \quad 
\text{TE}(x''',x';Y)
= \text{TE}(x''',x'';Y) + \text{TE}(x'',x';Y),
\end{aligned}
\end{equation}
for any \(x', x'', x'''\). These properties are intuitive: reversing the treatment contrast reverses the sign of the effect, and effects over adjacent contrasts add up consistently along a path. Such natural algebraic properties are fundamental for interpretable causal effects. 

In contrast, the NDE and NIE do not, in general, satisfy skew-symmetry or additivity. As a result, their decomposition of the total effect can exhibit unintuitive behavior. For example, reversing the treatment contrast may alter the relative contributions of direct and indirect effects, and decompositions over sub-intervals may not align with the decomposition over the full interval. Such phenomena can lead to what we term \emph{paradoxical interpretations} in mediation analysis (see Section~\ref{sec-paradox} for detailed examples). While these paradoxes do not invalidate NDE and NIE as causal quantities, they complicate interpretation and may hinder their use in practice.

To address these issues,  in this paper, we introduce new measures of direct and indirect effects that preserve both skew-symmetry and additivity. We begin with the case of a continuous treatment \(X\). Rather than decomposing global contrasts directly, we introduce a local decomposition of the causal effect, based on the derivative \(\mathbb{E}[\partial_x Y_x]\), into local direct and indirect components. By integrating these local effects along the treatment path, we define the \emph{cumulative natural direct and indirect effects} (CNDE and CNIE). These measures yield a pathwise decomposition of the total effect that satisfies both skew-symmetry and additivity. 
Although prior work has studied the estimation of mediation effects under continuous treatments \citep{Wang2016,Huber2020,Xu2021,Zhang2025}, existing approaches do not address the structural limitations of NDE and NIE. To the best of our knowledge, there has been no systematic development of direct and indirect effect measures that are specifically designed to preserve fundamental properties of skew-symmetry and additivity.

We further extend our framework to ordinal treatments. In this setting, we construct discrete analogues of the cumulative effects by aggregating local increments along ordered treatment levels. We  consider a  skew-symmetric version of NDE and NIE and  propose direct and indirect effect formulations that additionally satisfy additivity.

For each type of proposed measures, we establish decomposition results showing that the total effect can be expressed as the sum of the proposed direct and indirect effects, and we provide identification results under 
standard causal assumptions.
We illustrate the behavior of these measures, in comparison with NDE and NIE, using canonical linear mediation models with interaction. Finally, we demonstrate their practical relevance through an application to a real-world dataset.

\begin{table}[t]
\centering
\caption{Properties of direct and indirect effect measures considered in this paper.
\checkmark~indicates the property holds, and \xmark~indicates it does not.
Additivity$^*$ denotes additivity holds only for monotone triplets 
($x' < x'' < x'''$ or $x' > x'' > x'''$).}
\label{tab:properties}
\begin{tabular}{lccc}
\toprule
Measure & Skew-symmetry & Additivity & Treatment type \\
\midrule
NDE     & \xmark & \xmark          & Any        \\
NIE     & \xmark & \xmark         & Any        \\
S-NDE   & \checkmark & \xmark      & Any        \\
S-NIE   & \checkmark & \xmark      & Any        \\
CNDE    & \checkmark & \checkmark  & Continuous \\
CNIE    & \checkmark & \checkmark  & Continuous \\
CNDE-O  & \xmark & $\checkmark^*$  & Ordinal    \\
CNIE-O  & \xmark & $\checkmark^*$  & Ordinal    \\
S-CNDE-O & \checkmark & $\checkmark^*$ & Ordinal \\
S-CNIE-O & \checkmark & $\checkmark^*$ & Ordinal \\
\bottomrule
\end{tabular}
\end{table}

This paper makes the following contributions: 
\begin{itemize}
\item We identify fundamental structural limitations of NDE and NIE, namely their failure to satisfy skew-symmetry and additivity, and show how these limitations can lead to paradoxical interpretations in mediation analysis.

\item For continuous treatments, we introduce the cumulative natural direct and indirect effects (CNDE and CNIE), which satisfy both skew-symmetry and additivity, yielding a coherent decomposition of the total effect.

\item For ordinal treatments, we develop cumulative direct and indirect effect measures, S-CNDE-O and S-CNIE-O, based on skew-symmetric versions of NDE and NIE, termed S-NDE and S-NIE. The proposed measures preserve both skew-symmetry and additivity.

\end{itemize}
Table \ref{tab:properties} summarizes the properties of the proposed  direct and indirect effect measures.

\section{Notation and Background}

In this section, we introduce the notation and review previous studies on causal mediation analysis.
We represent each variable with a capital letter $(X)$ and its realized value with a small letter $(x)$.
Denote $\Omega_Y$ as the domain of $Y$,
$\mathbb{E}[Y]$ as the expectation of $Y$, $\mathbb{P}(Y> y)$ as the cumulative distribution function (CDF) of a continuous variable $Y$, and $\mathfrak{p}(Y=y)$ as the probability density function (PDF) of a continuous variable $Y$.
We denote $Y \indep M|X$ if $Y$ and $M$ are conditionally independent given $X$.
\begin{definition}[Skew-symmetry and additivity]
Let $\phi : \Omega_X \times \Omega_X \to \mathbb{R}$. We say that $\phi$ is \emph{skew-symmetric} if $\phi(x'',x')=-\phi(x',x'')$, and \emph{additive} if $\phi(x''',x')=\phi(x''',x'')+\phi(x'',x')$,  for any $x',x'',x'''\in \Omega_X$.
\end{definition}

{\bf Structural Causal Models.}
We use the language of Structural Causal Models (SCM) as our basic semantic and inferential framework \citep{Pearl09}.
An SCM $\mathcal{M}$ is a tuple $\left<{\boldsymbol V},{\boldsymbol U}, \mathcal{F}, \mathbb{P}_{\boldsymbol U} \right>$, where ${\boldsymbol U}$ is a set of exogenous (unobserved) variables following a joint distribution $\mathbb{P}_{\boldsymbol U}$, and ${\boldsymbol V}$ is a set of endogenous (observable) variables whose values are determined by structural functions $\mathcal{F}=\{f_{V_i}\}_{V_i \in {\boldsymbol V}}$ such that $v_i:= f_{V_i}({\mathbf{pa}}_{V_i},{\boldsymbol u}_{V_i})$ where ${\mathbf{PA}}_{V_i} \subseteq {\boldsymbol V}$ and $U_{V_i} \subseteq {\boldsymbol U}$. 
Each SCM $\mathcal{M}$ induces an observational distribution $\mathbb{P}_{\boldsymbol V}$ over ${\boldsymbol V}$, and a causal graph $G(\mathcal{M})$ over ${\boldsymbol V}$ in which there exists a directed edge from every variable in ${\mathbf{PA}}_{V_i}$ to $V_i$.
An intervention of setting a set of endogenous variables ${\boldsymbol X}$ to constants ${\boldsymbol x}$, denoted by $do({\boldsymbol x})$, replaces the original equations of ${\boldsymbol X}$ by the constants ${\boldsymbol x}$ and induces a \textit{sub-model}  $\mathcal{M}_{{\boldsymbol x}}$.
We denote the potential outcome $Y$ under intervention $do({\boldsymbol x})$ by $Y_{{\boldsymbol x}}({\boldsymbol u})$, which is the solution of $Y$ in the sub-model $\mathcal{M}_{{\boldsymbol x}}$ given ${\boldsymbol U}={\boldsymbol u}$. 

\begin{figure}[tb]
\centering
\scalebox{1}{
\begin{tikzpicture}
\node (z) at (0,2) {$M$};
\node (y) at (2,0) {$Y$};
\node (x) at (-2,0) {$X$};

\path (x) edge[->] (y);
    
\path (z) edge[->] (y);
\path (x) edge[->] (z);

\path (x) edge[<->,dotted,bend left] (z);
\path (z) edge[<->,dotted,bend left] (y);
\path (x) edge[<->,dotted,bend right] (y);
    

\end{tikzpicture}
}
\vspace{-0cm}
\caption{The causal graph representing SCM $\mathcal{M}_{N}$.}
\label{DAGN1}
\end{figure}

{\bf Causal Mediation Analysis.}
Let $X$ be a continuous treatment variable, $Y$ be a continuous outcome, and $M$ be a continuous mediation variable. 
We consider the following nonparametric SCM, $\mathcal{M}_{N}$:
\begin{equation}\label{eq-scmmn}
\begin{gathered}
  \mathcal{M}_{N}:  X:=f_X(\epsilon_X),\ \ 
    M:=f_{M}(X,\epsilon_{M}),\ \
    Y:=f_Y(X,M,\epsilon_Y),
   \end{gathered}
\end{equation}
where $\epsilon_X$, $\epsilon_{M}$, and $\epsilon_Y$ are latent exogenous variables following a joint distribution $\mathbb{P}$. 
Figure \ref{DAGN1} shows a causal graph representing $\mathcal{M}_{N}$, where bidirected edges indicate unmeasured exogenous confounders that affect the variables.

\citet{Pearl2001} defines the total, controlled direct, natural direct, and indirect effects as follows: 
\begin{definition}[TE, CDE, NDE, and NIE] 
\label{def1}
\mbox{}
\begin{enumerate}
\setlength{\itemsep}{2pt}
\setlength{\parskip}{4pt}
    \item Total Effect (TE):
    $\text{TE}(x'',x';Y)\defeq\mathbb{E}[Y_{x''}]-\mathbb{E}[Y_{x'}]$
    \item Controlled Direct Effect (CDE):
    $\text{CDE}(x'',x',m;Y)\defeq\mathbb{E}[Y_{x'',{m}}]-\mathbb{E}[Y_{x',{m}}]$
    \item Natural Direct Effect (NDE):
    $\text{NDE}(x'',x';Y)\defeq\mathbb{E}[Y_{x'',{M_{x'}}}]-\mathbb{E}[Y_{x'}]$
    \item Natural Indirect Effect (NIE):
    $\text{NIE}(x'',x';Y)\defeq\mathbb{E}[Y_{x',{M_{x''}}}]-\mathbb{E}[Y_{x'}]$ 
\end{enumerate}
for any $x', x'' \in \Omega_X$ and $m \in \Omega_M$.
\end{definition}
TE satisfies skew-symmetry and additivity: $\text{TE}(x'',x';Y)=-\text{TE}(x',x'';Y)$ and $\text{TE}(x''',x';Y)=\text{TE}(x''',x'';Y)+\text{TE}(x'',x';Y)$.
CDE also satisfies skew-symmetry and additivity; however, it depends on the values $m$ and has no corresponding indirect effect. 
On the other hand, neither NDE nor NIE satisfies the skew-symmetry or additivity, except in special cases, e.g., under Baron and Kenny's linear no-interaction model (see Example~\ref{eg-m1}). 
TE can be decomposed by 
\begin{equation}
\text{TE}(x'',x';Y)=\text{NDE}(x'',x';Y)-\text{NIE}(x',x'';Y).
\end{equation}


These direct and indirect effects may be identified from observational distributions under various settings \citep{Pearl2001,Avin2005,Shpitser2008,Shpitser2013,Malinsky2019}.
NDE and NIE are both identifiable through identifying the expectations of counterfactuals of the form $\mathbb{E}[Y_{x'',{M_{x'}}}]$.
The counterfactual $Y_{x'',{M_{x'}}}$ denotes the value of $Y$ under $do(X=x'')$ while holding $M$ fixed at $M_{x'}$, the value that $M$ would attain under  $do(X=x')$. 
\citet{Imai2010a} presented the following identification result under an assumption called sequential ignorability:  
\begin{assumption}[Sequential Ignorability]
\label{SCAS2}
The following two 
conditional independences hold:
$\{Y_{x,m},M_{x'}\}\indep X\text{ and }M_{x'} \indep Y_{x,m}$,
where $\mathfrak{p}(X=x)>0$ and $\mathfrak{p}(M=m|X=x)>0$, for any $m \in \Omega_M$ and $x \in \Omega_X$.
\end{assumption}
\begin{proposition}
\label{prop1}
Under SCM $\mathcal{M}_N$ and Assumption \ref{SCAS2}, the expectation of the counterfactual $\mathbb{E}[Y_{x,M_{x'}}]$ is identifiable by
\begin{equation}
\begin{aligned}
&\mathbb{E}[Y_{x,M_{x'}}]=\int_{\Omega_{M}}\mathbb{E}[Y|X=x,M=m]\mathfrak{p}(M=m|X=x')d{m}
\end{aligned}
\end{equation}
for any $x, x' \in \Omega_X$.
\end{proposition}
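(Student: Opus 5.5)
We prove this with the standard mediation formula argument of \citet{Imai2010a}: condition on the mediator's counterfactual value $M_{x'}$, and use the two independences in Assumption~\ref{SCAS2} to replace each counterfactual quantity with an observational one.

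\textbf{Step 1 (disintegrate over $M_{x'}$).} By the law of total expectation,
\[
\mathbb{E}[Y_{x,M_{x'}}]=\int_{\Omega_M}\mathbb{E}[Y_{x,M_{x'}}\mid M_{x'}=m]\,\mathfrak{p}(M_{x'}=m)\,dm .
\]
On the event $\{M_{x'}=m\}$, the nested counterfactual $Y_{x,M_{x'}}$ coincides with $Y_{x,m}$. This is composition in the SCM $\mathcal{M}_N$: $Y_{x,M_{x'}}(u)=f_Y(x,M_{x'}(u),\epsilon_Y)$ and $Y_{x,m}(u)=f_Y(x,m,\epsilon_Y)$. Hence the integrand equals $\mathbb{E}[Y_{x,m}\mid M_{x'}=m]$. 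The cross-world independence $M_{x'}\indep Y_{x,m}$ then reduces it to $\mathbb{E}[Y_{x,m}]$.

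\textbf{Step 2 (identify $\mathbb{E}[Y_{x,m}]$).} We show $\mathbb{E}[Y_{x,m}]=\mathbb{E}[Y\mid X=x,M=m]$ in three moves:
\begin{itemize}
\item From $\{Y_{x,m},M_{x'}\}\indep X$ we get $\mathbb{E}[Y_{x,m}]=\mathbb{E}[Y_{x,m}\mid X=x]$.
\item Applying Step~1's cross-world independence with $x'=x$ gives $Y_{x,m}\indep M_x$. Combined with $\{Y_{x,m},M_x\}\indep X$, this yields $Y_{x,m}\indep M_x\mid X=x$. So $\mathbb{E}[Y_{x,m}\mid X=x]=\mathbb{E}[Y_{x,m}\mid X=x,M_x=m]$.
\item By consistency, on $\{X=x,M_x=m\}$ we have $M=M_x=m$ and $Y=f_Y(x,m,\epsilon_Y)=Y_{x,m}$. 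Hence the last quantity is $\mathbb{E}[Y\mid X=x,M=m]$.
\end{itemize}
The positivity conditions in Assumption~\ref{SCAS2} ensure these conditional densities and expectations are well defined.

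\textbf{Step 3 (identify the law of $M_{x'}$).} From $M_{x'}\indep X$ and consistency, $\mathfrak{p}(M_{x'}=m)=\mathfrak{p}(M_{x'}=m\mid X=x')=\mathfrak{p}(M=m\mid X=x')$. Substituting Steps~2 and~3 into Step~1 gives the stated formula.

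\textbf{Main obstacle.} The delicate point is the second move of Step~2. We must verify that the joint independence $\{Y_{x,m},M_x\}\indep X$ together with $Y_{x,m}\indep M_x$ gives conditional independence given $X=x$. This follows from the elementary factorization
\[
\mathfrak{p}(y,m,x)=\mathfrak{p}(y)\,\mathfrak{p}(m)\,\mathfrak{p}(x),
\]
obtained by combining the two marginal independences. The other subtlety is the measure-theoretic care needed to condition on null events such as $\{M_{x'}=m\}$ for continuous $M$. We handle this by working with densities throughout, as the proposition's statement implicitly does.
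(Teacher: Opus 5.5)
Your proof is correct and is essentially the standard mediation-formula argument of \citet{Imai2010a}, which is all the paper relies on here: it states Proposition~\ref{prop1} with that citation and gives no separate proof. You disintegrate over $M_{x'}$, drop the conditioning by cross-world independence, and use ignorability of $X$ plus consistency to turn $\mathbb{E}[Y_{x,m}]$ and the law of $M_{x'}$ into $\mathbb{E}[Y\mid X=x,M=m]$ and $\mathfrak{p}(M=m\mid X=x')$; your one adaptation, deriving $Y_{x,m}\indep M_x\mid X=x$ from the paper's unconditional second independence together with $\{Y_{x,m},M_x\}\indep X$, is valid and recovers the conditional form Imai et al.\ actually assume.
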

{\bf Local Total Effect.}
Average partial causal effect (APCE) is another useful measure to evaluate the total effect of $X$ on $Y$ for a continuous treatment \citep{Chamberlain1984,Kawakami2023}.
In this paper, we call it the local total effect (LTE), which captures the infinitesimal total effect of $X$ at $X=x$:
\begin{definition}[LTE]
The local total effect (LTE) is defined by
\begin{equation}
\text{LTE}(x;Y)\defeq\mathbb{E}[\partial_xY_{x}]
\end{equation}
for any $x \in \Omega_X$.
\end{definition}
TE can be calculated using LTE by
\begin{equation}
\text{TE}(x'',x';Y)=\int_{x'}^{x''}\text{LTE}(x;Y)dx.
\end{equation}
{\bf Popular Mediation Analysis Models.}
Next, we review two widely used models in the causal mediation analysis literature.

\begin{example}[Linear No-Interaction Model] 
\label{eg-m1}
The following linear SCM, denoted by $\mathcal{M}_1$ and depicted by Figure \ref{DAG1}, is the most widely used specification in mediation analysis. It corresponds to the classical Baron and Kenny approach \citep{Baron1986}:
\begin{equation}\label{eq-m1}
   \mathcal{M}_1: X:=\epsilon_X,\ \  M:=\alpha X+\epsilon_{M},\ \
    Y:=\beta X+\gamma M +\epsilon_Y,
\end{equation}
where the exogenous variables $\epsilon_X$, $\epsilon_M$, and $\epsilon_Y$ are mutually independent and satisfy $\mathbb{E}[\epsilon_X]=\mathbb{E}[\epsilon_{M}]=\mathbb{E}[\epsilon_Y]=0$. 
$\mathcal{M}_1$ is a special case of $\mathcal{M}_{N}$ and the structural equation for $Y$ contains no interaction term between $X$ and $M$, commonly referred to as the no-interaction assumption \citep{Robins2003}. 
In this setting, 
the TE is decomposed into NDE and NIE as follows \citep{Vanderweele2009}:
\begin{equation}
\label{eq15}
\begin{aligned}
&\underbrace{(\beta+\alpha\gamma)(x''-x')}_{\text{TE}(x'',x';Y)}=\underbrace{\beta(x''-x')}_{\text{NDE}(x'',x';Y)}-\underbrace{\alpha\gamma(x'-x'')}_{\text{NIE}(x',x'';Y)}.
\end{aligned}
\end{equation}
And $\text{CDE}(x'',x',m;Y)$ is equal to $\beta(x''-x')$ for any $m \in \Omega_M$ and coincides with $\text{NDE}(x'',x';Y)$.
We have that both $\text{NDE}(x'',x';Y)$ and $\text{NIE}(x'',x';Y)$ satisfy skew-symmetry and additivity.
Further, the TE can be additively decomposed into NDE and NIE as follows:
\begin{equation}\label{eq-additive}
\text{TE}(x'',x';Y)=\text{NDE}(x'',x';Y)+\text{NIE}(x'',x';Y).
\end{equation}
\end{example}


\begin{example}[Linear Interaction Model]
 \label{eg-m2}   
Next, we show another widely used model, a linear SCM with an interaction term $XM$, denoted by $\mathcal{M}_2$  \citep{Vanderweele2009}:
\begin{equation}\label{eq-m2}
\begin{gathered}
   \mathcal{M}_2: X:=\epsilon_X,\ \  M:=\alpha X+\epsilon_{M},\ \  Y:=\beta X+\gamma M+ \delta XM+\epsilon_Y,
\end{gathered}
\end{equation}
where $\epsilon_X$, $\epsilon_M$, and $\epsilon_Y$ are mutually independent exogenous variables, and $\mathbb{E}[\epsilon_X]=\mathbb{E}[\epsilon_{M}]=\mathbb{E}[\epsilon_Y]=0$.
The explicit forms of NDE and NIE 
under $\mathcal{M}_2$ have been given by \citet{Vanderweele2009}, 
and the decomposition of TE into NDE and NIE is given by
\begin{equation}
\label{EXA1}
\begin{aligned}
&\underbrace{(\beta+\alpha\gamma)(x''-x')+\alpha\delta (x''^2-x'^2)}_{\text{TE}(x'',x';Y)}
=\underbrace{(\beta+\alpha\delta {x'})(x''-x')}_{\text{NDE}(x'',x';Y)}-\underbrace{(\alpha\gamma+\alpha\delta {x''})(x'-x'')}_{\text{NIE}(x',x'';Y)}.
\end{aligned}
\end{equation}
$\text{CDE}(x'',x',m;Y)$ is given by $\beta(x''-x')+\delta(x''-x')m$, which depends on the value of $m$ and differs from NDE.
In SCM $\mathcal{M}_2$, neither NDE nor NIE satisfies the skew-symmetry or additivity, and the additive decomposition in Eq.~(\ref{eq-additive}) does not hold.
\end{example}

\begin{figure}[tb]
\centering
\scalebox{1}{
\begin{tikzpicture}
\node (z) at (0,2) {$M$};
\node (y) at (2,0) {$Y$};
\node (x) at (-2,0) {$X$};

  \path (x) edge[->] node[midway, above] {$\alpha$} (z);
  \path (z) edge[->] node[midway, above] {$\gamma$} (y);
  \path (x) edge[->] node[midway, below] {$\beta$}  (y);
\end{tikzpicture}
}
\vspace{-0cm}
\caption{The causal graph representing SCM $\mathcal{M}_{1}$ (Baron and Kenny's model).}
\label{DAG1}
\end{figure}

\section{Mediation Paradox \label{sec-paradox}}

In this section, we examine how the lack of skew-symmetry and additivity in the NDE and NIE can lead to paradoxical interpretations in mediation analysis.

As an illustrative example, we consider  assessing the effect of cigarette smoking on heart disease mediated through hypertension \citep{Hernan2023}.
We let the amount of cigarettes per day be the treatment ($X$), a person's blood pressure be the mediator ($M$), and the degree of heart disease be the outcome ($Y$).
Researchers try to answer the following question:
\begin{center}
{\it
``Through which mechanism does smoking increase a person’s risk of heart disease?"
}
\end{center}
Specifically, researchers may want to know whether smoking cigarettes raises the risk of heart disease by raising a person's blood pressure or whether smoking cigarettes raises the risk of heart disease directly.
Causal mediation analyses have empirically answered such  questions by decomposing the total effect into the direct and indirect effects.

First, we illustrate how the lack of symmetry in NDE and NIE provokes paradoxes.
We consider $\mathcal{M}_2$ in Example~\ref{eg-m2}. 
Suppose $\beta=0$, $\alpha=1$, $\gamma=0$, and  $\delta=1$. Then 
TE can be decomposed by Eq.~(\ref{EXA1}) in two ways using NDE and NIE, obtained by flipping the treatment values $1$ and $0$, as shown below:
\begin{equation}\label{eq-m2nde}
\begin{aligned}
&\underbrace{1}_{\text{TE}(1,0;Y)}=\underbrace{0}_{\text{NDE}(1,0;Y)}-\underbrace{(-1)}_{\text{NIE}(0,1;Y)},
\underbrace{-1}_{\text{TE}(0,1;Y)}=\underbrace{-1}_{\text{NDE}(0,1;Y)}-\underbrace{0}_{\text{NIE}(1,0;Y)}.
\end{aligned}
\end{equation}
This represents a paradoxical situation because the decomposition of $\text{TE}(1,0;Y)$ implies there is no direct effect, while the decomposition of $\text{TE}(0,1;Y)$ implies there is no indirect effect.
Just by flipping the values of treatments $1$ and $0$, the result of causal mediation analysis is completely overturned.
This could be unacceptable for most researchers and provoke severe problems in the interpretation of the results of the mediation analysis.

For the cigarette example, the above results state that 
\begin{center}
{\it
``The effect of increasing cigarettes on heart disease is completely via the effect of increasing blood pressure; 
on the other hand, the effect of reducing cigarettes on heart disease is not via the effect of reducing blood pressure."
}
\end{center}
This seems to be a contradiction.

In general, because NDE and NIE do not satisfy skew-symmetry, the proportions of the direct and indirect effects in the total effect—NIE/TE is called the ``proportion mediated'' in \citep{VanderWeele2013b}—will change when the reference point is flipped, that is, 
\begin{align}
\frac{\text{NDE}(x'',x';Y)}{\text{TE}(x'',x';Y)}&\neq\frac{\text{NDE}(x',x'';Y)}{\text{TE}(x',x'';Y)},\
\frac{\text{NIE}(x'',x';Y)}{\text{TE}(x'',x';Y)}\neq\frac{\text{NIE}(x',x'';Y)}{\text{TE}(x',x'';Y)}
\end{align}
This lack of invariance may lead to contradictory interpretations in mediation analysis.


Next, we explain how the lack of additivity also induces paradoxical situations.
\begin{example} \label{eg-m3}
We consider a  linear SCM with an interaction term $X^2M$, $\mathcal{M}_3$:
\begin{equation}\label{eq-m3}
\begin{gathered}
     \mathcal{M}_3:   X:=\epsilon_X,\ \  M:=\alpha X+\epsilon_{M},\ \  Y:=\beta X+\gamma M+ \delta X^2 M+\epsilon_Y,
    \end{gathered}
\end{equation}
where $\epsilon_X$, $\epsilon_{M}$, and $\epsilon_Y$ are mutually independent exogenous variables, and $\mathbb{E}[\epsilon_X]=\mathbb{E}[\epsilon_{M}]=\mathbb{E}[\epsilon_Y]=0$. 
We have the following decomposition of TE by NDE and NIE:
\begin{equation}\label{eq-m3nde}
\begin{aligned}
&\underbrace{\{(\beta+\alpha\gamma)(x''-x')+\alpha\delta (x''^3-x'^3)\}}_{\text{TE}(x'',x';Y)}\\
&=\underbrace{\{\beta(x''-x')+\alpha\delta (x''^2-x'^2)x'\}}_{\text{NDE}(x'',x';Y)}-\underbrace{\{\alpha\gamma(x'-x'')+\alpha\delta x''^2(x'-x'')\}}_{\text{NIE}(x',x'';Y)}.
\end{aligned}
\end{equation}
We have that neither NDE nor NIE satisfies the skew-symmetry or additivity.     

Now supposing $\beta=0$, $\alpha=1$, $\gamma=0$, and  $\delta=1$, then $\text{TE}(1,-1;Y)$ 
is decomposed into NDE and NIE as follows: 
\begin{equation}
\label{eq-m3e1}
\begin{aligned}
\underbrace{2}_{\text{TE}(1,-1;Y)}=\underbrace{0}_{\text{NDE}(1,-1;Y)}-\underbrace{(-2)}_{\text{NIE}(-1,1;Y)}.
\end{aligned}
\end{equation}
Meanwhile, $\text{TE}(1,0;Y)$ and $\text{TE}(0,-1;Y)$ are decomposed by
\begin{equation}
\begin{aligned}
\label{eq-m3e2}
&\underbrace{1}_{\text{TE}(1,0;Y)}=\underbrace{0}_{\text{NDE}(1,0;Y)}-\underbrace{(-1)}_{\text{NIE}(0,1;Y)},
\underbrace{1}_{\text{TE}(0,-1;Y)}=\underbrace{1}_{\text{NDE}(0,-1;Y)}-\underbrace{0}_{\text{NIE}(-1,0;Y)}.
\end{aligned}
\end{equation}
The above results appear paradoxical: $\text{TE}(1,0;Y)$ is decomposed entirely into indirect effects, and $\text{TE}(0,-1;Y)$ is decomposed entirely into direct effects; whereas their sum, $\text{TE}(1,-1;Y)$, is decomposed entirely into indirect effects. This inconsistency makes it difficult to interpret the direct and indirect pathways across treatment contrasts in a coherent manner. 
For the cigarette example, letting $X=0$ be a reference of the amount of cigarettes, the above results state that  
\begin{center}
{\it
``The effect of increasing cigarette consumption by one unit above the reference level is entirely mediated through increased blood pressure. In contrast, the effect of increasing cigarette consumption by one unit from below the reference level to the reference level is entirely direct and not mediated through blood pressure. However, the effect of increasing cigarette consumption by two units across the reference level is again entirely mediated through increased blood pressure."
}
\end{center}
This interpretation also appears contradictory, as the mediation mechanism changes dramatically depending on how the treatment contrast is defined.
\end{example}

These paradoxes could be unacceptable for most researchers, and it would  be desirable to avoid such paradoxical decompositions of TE into NDE and NIE.
In this paper, we propose new measures of direct and indirect effects that avoid these paradoxes.



\section{Direct and Indirect Effects for Continuous Treatment}
\label{SEC3}

In this section, we provide a new decomposition of TE based on the decomposition of LTE.
It preserves both skew-symmetry and additivity under an assumption called differentiability and continuity.

\subsection{Decomposition of LTE}

We explain the nonparametric decomposition of LTE in preparation for defining a new decomposition of TE that satisfies both skew-symmetry and additivity.
We first make the following assumption:
\begin{assumption}[Differentiability and Continuity]
\label{SCAS1}
For any $x,x^* \in \Omega_X$, $\mathbb{E}[Y_{x,{M}_{x^*}}]$ is partially differentiable w.r.t. $x$ and $x^*$ respectively, and $\partial_{x^*} \mathbb{E}[Y_{x,{M}_{x^*}}]$ and $\partial_{x^*} \mathbb{E}[Y_{x^*,{M}_x}]$ are continuous w.r.t. $x$.
\end{assumption}
\noindent This assumption is reasonable in settings where the treatment, outcome, and mediator are all continuous. In particular, the three example SCMs, $\mathcal{M}_1$, $\mathcal{M}_2$, and $\mathcal{M}_3$, all satisfy this assumption.

We then define 
local direct and indirect effects w.r.t. LTE:\footnote{
The LNDE and (negative of) LNIE are referred to as the instantaneous natural direct and indirect effects in \citep{Knafl2017}.
They studied the estimation of these effects under specific polynomial models.
In this paper, we formally state the assumptions required to define these effects and discuss their nonparametric identification.
} 
\begin{definition}[LNIE, LNDE, and LCDE]\label{def-pnie}
Under SCM $\mathcal{M}_N$ and Assumption \ref{SCAS1}, we define 
\begin{enumerate}
\setlength{\itemsep}{2pt}
\setlength{\parskip}{4pt}
\item Local natural direct effect (LNDE):
$\text{LNDE}(x;Y)\defeq\partial_{x^*} \mathbb{E}[Y_{x^*,{M}_{x}}]|_{x^*=x}$
\item Local natural indirect effect (LNIE):
$\text{LNIE}(x;Y)\defeq\partial_{x^*} \mathbb{E}[Y_{x,{M}_{x^*}}]|_{x^*=x}$
\item Local controlled direct effect (LCDE):
$\text{LCDE}(x,m;Y)\defeq\partial_{x} \mathbb{E}[Y_{x,m}]$
\end{enumerate}
for any $x \in \Omega_X$ and $m \in \Omega_M$. 
\end{definition}
\noindent In Def.~\ref{def-pnie}$, \partial_{x^*} \mathbb{E}[Y_{x^*,{M}_{x}}]$ represents
$\lim_{h \rightarrow 0}\frac{\mathbb{E}[Y_{x^*+h,{M}_{x}}]-\mathbb{E}[Y_{x^*,{M}_{x}}]}{h}$ and $\partial_{x^*} \mathbb{E}[Y_{x,{M}_{x^*}}]$ represents $\lim_{h \rightarrow 0}\frac{\mathbb{E}[Y_{x,{M}_{x^*+h}}]-\mathbb{E}[Y_{x,{M}_{x^*}}]}{h}$.
LNDE is the slope of the expectation of the counterfactual $Y_{x^*,{M}_{x}}$ w.r.t. $x^*$ at $x^*=x$, 
which represents the infinitesimal effect of the treatment at $X=x$ while holding the mediator $M$ at its natural level under $X=x$.  
LNIE is the slope of the expectation of the counterfactual $Y_{x,{M}_{x^*}}$ w.r.t. $x^*$ at $x^*=x$, which represents the influence of the treatment at $X=x$ through $M_{x}$, while keeping the treatment itself fixed to $X=x$.
LCDE is the slope of the expectation of the counterfactual $Y_{x,m}$ w.r.t. $x$.
It depends on $m$ and does not have a parallel indirect effect.

We have the following important relationship that the LTE is  additively decomposed into LNDE and LNIE:
\begin{restatable}{lemma}{Lemmaone}[Decomposition of LTE by LNDE and LNIE]
\label{lemma1}
Under SCM $\mathcal{M}_{N}$ and Assumption \ref{SCAS1}, 
\begin{equation}
\text{LTE}(x;Y)=\text{LNDE}(x;Y)+\text{LNIE}(x;Y)    
\end{equation}
for any $x \in \Omega_X$.
\end{restatable}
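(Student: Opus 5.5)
The plan is to treat $g(a,b)\defeq\mathbb{E}[Y_{a,M_b}]$ as a function of two real variables and recognize $\text{LTE}(x;Y)$ as the derivative of $g$ along the diagonal. By composition of interventions in $\mathcal{M}_N$, $Y_x = f_Y(x, M_x, \epsilon_Y) = Y_{x,M_x}$ pointwise in the exogenous variables. Hence $\mathbb{E}[Y_x]=g(x,x)$. I will take $\text{LTE}(x;Y)=\mathbb{E}[\partial_x Y_x]$ to equal $\partial_x\mathbb{E}[Y_x]$; this is the reading under which the paper writes $\text{TE}(x'',x';Y)=\int_{x'}^{x''}\text{LTE}(x;Y)\,dx$. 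The claim then reduces to the chain rule
\[
\frac{d}{dx}g(x,x)=\partial_1 g(x,x)+\partial_2 g(x,x),
\]
whose two terms are exactly $\text{LNDE}(x;Y)$ and $\text{LNIE}(x;Y)$ from Definition~\ref{def-pnie}.

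Assumption~\ref{SCAS1} only gives separate partial differentiability plus continuity in one argument, not joint differentiability. So rather than quoting the chain rule, I will prove it directly. Split the increment as
\[
g(x+h,x+h)-g(x,x)=\bigl[g(x+h,x+h)-g(x+h,x)\bigr]+\bigl[g(x+h,x)-g(x,x)\bigr].
\]
Dividing the second bracket by $h$ gives a quotient that tends to $\partial_{x^*}\mathbb{E}[Y_{x^*,M_x}]|_{x^*=x}=\text{LNDE}(x;Y)$ by definition.

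For the first bracket, the function $b\mapsto g(x+h,b)$ is differentiable by Assumption~\ref{SCAS1}. The mean value theorem gives $\xi_h$ between $x$ and $x+h$ with
\[
g(x+h,x+h)-g(x+h,x)=h\,\partial_{x^*}\mathbb{E}[Y_{x+h,M_{x^*}}]|_{x^*=\xi_h}.
\]
I then need this derivative to converge to $\partial_{x^*}\mathbb{E}[Y_{x,M_{x^*}}]|_{x^*=x}=\text{LNIE}(x;Y)$ as $h\to0$.

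This last limit is the main obstacle, because both the first argument $x+h$ and the evaluation point $\xi_h$ move at once. The assumption provides continuity of $\partial_{x^*}\mathbb{E}[Y_{x,M_{x^*}}]$ in $x$, meaning the first (treatment) argument. To use it, I would reorder the split so that the step in the mediator argument is taken with the first argument fixed at $x$:
\[
g(x+h,x+h)-g(x,x)=\bigl[g(x+h,x+h)-g(x,x+h)\bigr]+\bigl[g(x,x+h)-g(x,x)\bigr].
\]
The second bracket over $h$ then tends to $\text{LNIE}(x;Y)$ by definition. The first bracket is handled by the mean value theorem in the first argument, and its limit uses the stated continuity of $\partial_{x^*}\mathbb{E}[Y_{x^*,M_x}]$ in $x$. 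If a single ordering still requires joint continuity at $(x,x)$, I would read Assumption~\ref{SCAS1} as intended to supply it. Under that reading the joint limit holds, and summing the two limits yields $\text{LTE}(x;Y)=\text{LNDE}(x;Y)+\text{LNIE}(x;Y)$.
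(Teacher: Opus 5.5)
This is the paper's argument. Its proof asserts that $(x,x^*)\mapsto\mathbb{E}[Y_{x,M_{x^*}}]$ is totally differentiable ``from Assumption~\ref{SCAS1}'' and expands the diagonal increment to first order. That is exactly the chain rule you prove by hand, with the same identifications $\text{LTE}(x;Y)=\partial_x\mathbb{E}[Y_x]$ and $Y_x=Y_{x,M_x}$. Your mean-value version is more careful, and it is complete once one partial derivative is jointly continuous at $(x,x)$. For instance, if $\partial_2 g$ is jointly continuous there, your first split already closes, because $(x+h,\xi_h)\to(x,x)$.

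Your claim that the reordered split can be finished with the stated continuity is not right. In $g(x+h,x+h)-g(x,x+h)=h\,\partial_1 g(\eta_h,x+h)$ both arguments move. Assumption~\ref{SCAS1} only makes $b\mapsto\partial_1 g(a,b)$ continuous for fixed $a$, and $a\mapsto\partial_2 g(a,b)$ continuous for fixed $b$. So you face the same obstruction as in your first ordering.

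No other argument can do better, because under that literal reading the lemma is false. Take $M:=X$ and $Y:=g(X,M)$ with $g(a,b)=a^2b^2(a+b)/(a^4+b^4)$ and $g(0,0)=0$.
\begin{itemize}
\item Off the origin $g$ is smooth, and $g(\cdot,0)\equiv g(0,\cdot)\equiv 0$.
\item Writing $g=b^2\cdot a^2(a+b)/(a^4+b^4)$ shows $\partial_2 g(\cdot,0)\equiv 0$; by symmetry $\partial_1 g(0,\cdot)\equiv 0$.
\item Hence the literal Assumption~\ref{SCAS1} holds.
\item Yet $\mathbb{E}[Y_x]=g(x,x)=x$ gives $\text{LTE}(0;Y)=1$, while $\text{LNDE}(0;Y)=\text{LNIE}(0;Y)=0$.
\end{itemize}

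So your fallback is not optional. Joint continuity at $(x,x)$ of $\partial_1 g$ or $\partial_2 g$ has to be stated as the hypothesis. This is also what the paper's claim of total differentiability tacitly needs. With that hypothesis, your proof goes through.
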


Next, we study the identifiability of LNDE and LNIE. 
They can be identified through identifying $\mathbb{E}[Y_{x,{M}_{x^*}}]$ via Proposition~\ref{prop1}. 
Once $\mathbb{E}[Y_{x,{M}_{x^*}}]$ is identified, LNDE and LNIE are given by  $\partial_{x^*} \mathbb{E}[Y_{x^*,{M}_{x}}]|_{x^*=x}$ and $\partial_{x^*} \mathbb{E}[Y_{x,{M}_{x^*}}]|_{x^*=x}$, respectively.
We next provide explicit expressions to facilitate estimation.
\begin{restatable}{theorem}{Theoremone}[Identification of LNDE and LNIE]
\label{cor3}
Under SCM $\mathcal{M}_N$ and Assumptions \ref{SCAS2} and \ref{SCAS1}, 
if the following regularity condition holds: either
 (i) $\mathbb{E}[Y|X=x,M=m]$ is bounded and $M$ has bounded support; or
(ii) $\mathbb{E}[Y|X=x, M=m]$ grows at most polynomially in $m$, and the exogenous error $\varepsilon_M$ in $\mathcal{M}_N$ 
has a continuously differentiable density $\mathfrak{p}$ such that $\mathfrak{p}'(m)$ is integrable and $\mathfrak{p}(m) \to 0$
 as $|m| \to \infty$ faster than any polynomial (e.g., Gaussian),
then LNDE and LNIE are identifiable for any $x \in \Omega_X$ by
\begin{align}
&\text{LNDE}(x;Y)=\int_{\Omega_{M}} \partial_x \mathbb{E}[Y|X=x,M=m]\mathfrak{p}(M=m|X=x)d{m},\\
&\text{LNIE}(x;Y)=\int_{\Omega_M}\partial_{m}\mathbb{E}[Y|X=x,M=m]\partial_x \mathbb{P}(M>m|X=x)dm.
\end{align}
\end{restatable}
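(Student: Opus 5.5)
The plan is to start from Proposition~\ref{prop1}, which under Assumption~\ref{SCAS2} gives
\[
g(x,x^*) \defeq \mathbb{E}[Y_{x,M_{x^*}}]=\int_{\Omega_M}\mu(x,m)\,\mathfrak{p}(M=m|X=x^*)\,dm ,
\qquad \mu(x,m)\defeq \mathbb{E}[Y|X=x,M=m].
\]
By Definition~\ref{def-pnie}, $\text{LNDE}(x;Y)=\partial_{x^*}g(x^*,x)|_{x^*=x}$ and $\text{LNIE}(x;Y)=\partial_{x^*}g(x,x^*)|_{x^*=x}$. Both formulas then come from differentiating this integral under the integral sign, once in the first argument and once in the second.

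For LNDE, I differentiate $g(x^*,x)$ in $x^*$. The density factor does not depend on $x^*$, so the derivative acts only on $\mu(x^*,m)$. Passing the derivative inside needs an integrable dominating function for difference quotients of $\mu$ in its first argument. Under (i) this holds because $M$ has bounded support, assuming $\partial_x\mu$ is locally bounded. Under (ii) it holds because $\mathfrak{p}(M=m|X=x)=\mathfrak{p}(m-\text{shift})$ decays faster than any polynomial, assuming $\partial_x\mu$ grows at most polynomially in $m$. Dominated convergence then gives the first display directly. I would state these auxiliary growth conditions on $\partial_x\mu$ explicitly, since the theorem's conditions are phrased mainly for the LNIE part.

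For LNIE, I differentiate $g(x,x^*)$ in $x^*$. I first write $\mathfrak{p}(M=m|X=x^*)=-\partial_m\mathbb{P}(M>m|X=x^*)$ and integrate by parts in $m$:
\[
g(x,x^*)=\Big[-\mu(x,m)\,\mathbb{P}(M>m|X=x^*)\Big]_{\partial\Omega_M}+\int_{\Omega_M}\partial_m\mu(x,m)\,\mathbb{P}(M>m|X=x^*)\,dm .
\]
Differentiating in $x^*$ under the integral and setting $x^*=x$ produces the claimed expression, provided the derivative of the boundary term in $x^*$ vanishes. An alternative route is to differentiate first, giving $\int\mu\,\partial_{x^*}\mathfrak{p}\,dm$, and then integrate by parts using $\partial_{x^*}\mathfrak{p}(M=m|X=x^*)=-\partial_m\partial_{x^*}\mathbb{P}(M>m|X=x^*)$. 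This is cleaner because $\int\partial_{x^*}\mathfrak{p}\,dm=0$.

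The main obstacle is justifying the boundary terms and the interchange of limits, and this is where the regularity conditions enter.

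\textbf{Case (i).} Take $\Omega_M=[a,b]$. At $m=a$ we have $\mathbb{P}(M>a|X=x^*)=1$, and at $m=b$ we have $\mathbb{P}(M>b|X=x^*)=0$. Both are constant in $x^*$, so the $x^*$-derivative of $\mu(x,m)\mathbb{P}(M>m|X=x^*)$ vanishes at the endpoints. Boundedness of $\mu$ and of the support then makes dominated convergence immediate.

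\textbf{Case (ii).} In $\mathcal{M}_N$ with additive or location-type noise, $\partial_{x^*}\mathbb{P}(M>m|X=x^*)$ is a multiple of the noise density $\mathfrak{p}$ evaluated at a shifted point. This decays faster than any polynomial, so it kills the polynomial growth of $\mu$ as $|m|\to\infty$ and the boundary terms vanish. Integrability of $\mathfrak{p}'$ supplies the dominating function needed to differentiate $\int\mu\,\mathfrak{p}\,dm$ under the integral. If $M$ is only generally $f_M(x,\epsilon_M)$, I would restrict to the case where $\partial_x\mathbb{P}(M>m|X=x)$ inherits this tail decay, which is the intent of condition (ii).

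Finally, Assumption~\ref{SCAS1} ensures the partial derivatives exist and are evaluated consistently at $x^*=x$, which completes the identification.
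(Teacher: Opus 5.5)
Your proposal is correct and matches the paper's argument. Sequential ignorability gives the mixture representation of $\mathbb{E}[Y_{x,M_{x^*}}]$; LNDE comes from differentiating only the outcome regression under the integral; and LNIE comes from an integration by parts that trades the mediator density for the survival function $\mathbb{P}(M>m\mid X=x)$. The only difference is ordering: the paper takes the density difference $\mathfrak{p}(M_{x+\delta}=m)-\mathfrak{p}(M_x=m)$ at the counterfactual level, integrates by parts against $\mathbb{E}[Y_{x,m}]$, lets $\delta\to 0$, and only then substitutes observational quantities. That is your ``alternative route'', and it is the one you should use in case (ii), because there $\mu(x,m)\,\mathbb{P}(M>m\mid X=x^*)$ need not vanish as $m\to-\infty$.
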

\noindent {The regularity condition in Theorem~\ref{cor3} ensures the validity of the integration-by-parts argument used in the proof.}

\subsection{Decomposition of TE via decomposing LTE}
Next, we introduce new measures of direct and indirect effects that satisfy skew-symmetry and additivity through  LNDE and LNIE.
\begin{definition}[CNDE and CNIE]
Under SCM $\mathcal{M}_N$ and Assumption \ref{SCAS1}, 
we define the cumulative natural direct and indirect effects (CNDE and CNIE) over the interval $[x', x'']$ as the integral of the corresponding local effects: 
\begin{equation}
\begin{aligned}
\text{CNDE}(x'',x';Y)&\defeq\int_{x'}^{x''} \text{LNDE}(x;Y) dx, \\ 
\text{CNIE}(x'',x';Y)&\defeq\int_{x'}^{x''}\text{LNIE}(x;Y)dx
\end{aligned}
\end{equation}
for any $x', x'' \in \Omega_X$. 
\end{definition}
CNDE and CNIE capture the accumulated direct and indirect effects along the treatment path from $x'$ to $x''$, respectively.  
Note the distinction from NDE and NIE, which, under Assumption \ref{SCAS1}, can be expressed as 
\begin{align}
\text{NDE}(x'',x';Y)&=\left(\int_{x'}^{x''} \partial_{x^*}\mathbb{E}[Y_{{x^*},M_{x'}}]d{x^*}\right), \\
\text{NIE}(x'',x';Y)&=\left(\int_{x'}^{x''} \partial_{x^*}\mathbb{E}[Y_{x',M_{x^*}}]dx^*\right).
\end{align}
The key distinction lies in how the effects are accumulated along the interval. The CNDE and CNIE integrate \emph{local infinitesimal effects} evaluated along the treatment path, so that both the treatment level and the reference mediator value evolve with $x$. This yields a pathwise decomposition that aggregates infinitesimal direct and indirect effects as the treatment moves from $x'$ to $x''$.
In contrast, the NDE and NIE are constructed by integrating derivatives while holding one argument fixed at an endpoint (either $x'$ or $x''$), so the accumulation is taken \emph{relative to a fixed reference regime} rather than along the evolving path.

CNDE and CNIE satisfy both skew-symmetry and additivity: 
\begin{restatable}{theorem}{Theoremtwo}[Skew-symmetry and additivity of CNDE and CNIE]
\label{theo2}
Under SCM $\mathcal{M}_{N}$ and Assumption \ref{SCAS1}, CNDE and CNIE satisfy both skew-symmetry and additivity.
\end{restatable}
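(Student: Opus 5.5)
The plan is to read both properties off the fact that CNDE and CNIE are Riemann integrals of single-variable functions along the treatment axis. Skew-symmetry and additivity of $\phi(x'',x')=\int_{x'}^{x''} g(x)\,dx$ are then just the orientation convention and the interval-additivity of the one-dimensional integral. The real work is to check that $g=\text{LNDE}(\cdot;Y)$ and $g=\text{LNIE}(\cdot;Y)$ are integrable on every compact interval in $\Omega_X$, so that the integrals are well defined for every pair and triple of endpoints. I will assume $\Omega_X$ is an interval, which is implicit in the definition, since the path from $x'$ to $x''$ must stay in $\Omega_X$.

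\textbf{Step 1 (integrability).} Under Assumption~\ref{SCAS1}, the map $(x,x^*)\mapsto \partial_{x^*}\mathbb{E}[Y_{x,M_{x^*}}]$ is continuous in $x$ for each fixed $x^*$. The same holds for $\partial_{x^*}\mathbb{E}[Y_{x^*,M_x}]$. I intend to read the assumption as also giving continuity of the diagonal functions $x\mapsto \text{LNDE}(x;Y)$ and $x\mapsto \text{LNIE}(x;Y)$. This is the point where I expect the main obstacle: separate continuity in one argument does not automatically give continuity along the diagonal $x^*=x$.

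There are two ways to handle this.
\begin{itemize}
\item The simplest is to observe that the definition of CNDE and CNIE already presupposes that these integrals exist. The theorem can then be stated for whatever integral (Riemann or Lebesgue) makes the definition meaningful, and additivity over intervals holds for any locally integrable function.
\item Alternatively, use Lemma~\ref{lemma1}: $\text{LNDE}+\text{LNIE}=\text{LTE}$, and $\text{LTE}$ is integrable because $\text{TE}=\int \text{LTE}$ is assumed. So it suffices to obtain local integrability of one of the two, say LNIE. I will present the first route and note the second as a consistency remark.
\end{itemize}

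\textbf{Step 2 (skew-symmetry).} With the convention $\int_{b}^{a} g = -\int_a^b g$, we get
\[
\text{CNDE}(x',x'';Y)=\int_{x''}^{x'}\text{LNDE}(x;Y)\,dx=-\int_{x'}^{x''}\text{LNDE}(x;Y)\,dx=-\text{CNDE}(x'',x';Y).
\]
The identical argument applies to CNIE. This is the key contrast with NDE. There, reversing the contrast also changes the fixed reference mediator $M_{x'}$ to $M_{x''}$, so the integrand itself changes. For CNDE the integrand depends only on the current point $x$, not on the endpoints.

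\textbf{Step 3 (additivity).} For any $x',x'',x'''\in\Omega_X$, in any order, interval additivity of the oriented integral gives
\[
\int_{x'}^{x'''} g = \int_{x'}^{x''} g + \int_{x''}^{x'''} g.
\]
This holds for every order of the three points, not just monotone ones, once Step 2's orientation convention is in force. Applying it with $g=\text{LNDE}$ and $g=\text{LNIE}$ yields additivity of CNDE and CNIE.

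Finally, as a sanity check that ties back to the decomposition, I will add a remark. Combining Lemma~\ref{lemma1} with $\text{TE}=\int\text{LTE}$ gives $\text{TE}=\text{CNDE}+\text{CNIE}$. Hence the two properties for CNDE and CNIE are compatible with those of TE, and the difference $\text{TE}-\text{CNDE}$ inherits them automatically.
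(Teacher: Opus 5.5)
Your proposal is correct and matches the paper's proof: skew-symmetry comes from reversing the limits of integration, and additivity comes from splitting the oriented integral at the intermediate point, applied to LNDE and LNIE. Your added discussion of local integrability is a harmless refinement rather than a different route; you rightly note that separate continuity in Assumption~\ref{SCAS1} does not by itself give continuity along the diagonal $x^*=x$, and the paper tacitly takes integrability for granted in the definition of CNDE and CNIE.
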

We can easily extend additivity for arbitrary numbers of treatments, e.g., 
\begin{equation}
\begin{aligned}
\text{CNDE}(x'''',x';Y)&=\text{CNDE}(x'''',x''';Y)+\text{CNDE}(x''',x'';Y)+\text{CNDE}(x'',x';Y)\\
\text{CNIE}(x'''',x';Y)&=\text{CNIE}(x'''',x''';Y)+\text{CNIE}(x''',x'';Y)+\text{CNIE}(x'',x';Y)
\end{aligned}
\end{equation}
for any $x', x'',x''', x'''' \in \Omega_X$. 

Furthermore, we have the following decomposition theorem:
\begin{restatable}{theorem}{Theoremthree}[Decomposition of TE into CNDE and CNIE]
\label{theo3}
Under SCM $\mathcal{M}_{N}$ and Assumption \ref{SCAS1}, 
\begin{align}
\text{TE}(x'',x';Y)
&=\text{CNDE}(x'',x';Y)-\text{CNIE}(x',x'';Y)\\
&=\text{CNDE}(x'',x';Y)+\text{CNIE}(x'',x';Y) \label{eq-decom-c}
\end{align}
for any $x',x'' \in \Omega_X$.
\end{restatable}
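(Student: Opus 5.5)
The plan is to integrate the local decomposition of Lemma~\ref{lemma1} along the treatment path, and then use Theorem~\ref{theo2} to convert between the two displayed forms.

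\emph{Step 1: the additive form.} Recall that under Assumption~\ref{SCAS1} the paper gives
\[
\text{TE}(x'',x';Y)=\int_{x'}^{x''}\text{LTE}(x;Y)\,dx .
\]
Substituting Lemma~\ref{lemma1}, $\text{LTE}(x;Y)=\text{LNDE}(x;Y)+\text{LNIE}(x;Y)$, and using linearity of the integral gives
\[
\text{TE}(x'',x';Y)=\int_{x'}^{x''}\text{LNDE}(x;Y)\,dx+\int_{x'}^{x''}\text{LNIE}(x;Y)\,dx .
\]
By definition the right-hand side is $\text{CNDE}(x'',x';Y)+\text{CNIE}(x'',x';Y)$. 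This is Eq.~(\ref{eq-decom-c}).

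\emph{Step 2: the first form.} By skew-symmetry of CNIE from Theorem~\ref{theo2}, $\text{CNIE}(x'',x';Y)=-\text{CNIE}(x',x'';Y)$. Alternatively, this follows directly from reversing the orientation of the integral, $\int_{x'}^{x''}=-\int_{x''}^{x'}$. Substituting into Step 1 yields the first equality. This mirrors the form $\text{TE}=\text{NDE}(x'',x')-\text{NIE}(x',x'')$ of the classical decomposition.

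\emph{Main obstacle.} The only delicate point is justifying that the integrals exist and that $\text{TE}=\int\text{LTE}$ is valid; I will handle it as follows. Write $g(x,x^*)=\mathbb{E}[Y_{x,M_{x^*}}]$, so that $\mathbb{E}[Y_x]=g(x,x)$ by composition. Assumption~\ref{SCAS1} provides partial differentiability of $g$ and continuity of the relevant partials. Evaluated on the diagonal, these partials are exactly LNDE and LNIE, which are therefore continuous in $x$, so both integrals are Riemann integrals. The chain rule along the diagonal then gives $\frac{d}{dx}g(x,x)=\text{LNDE}(x;Y)+\text{LNIE}(x;Y)$; this is precisely the content of Lemma~\ref{lemma1}, which I take as given. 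Applying the fundamental theorem of calculus to $x\mapsto g(x,x)$ on $[x',x'']$ then yields $\text{TE}(x'',x';Y)=g(x'',x'')-g(x',x')=\int_{x'}^{x''}\text{LTE}(x;Y)\,dx$. The case $x''<x'$ follows by the same orientation convention.
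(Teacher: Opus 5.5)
Your Steps 1 and 2 are exactly the paper's proof. You integrate Lemma~\ref{lemma1} over $[x',x'']$ using $\text{TE}=\int_{x'}^{x''}\text{LTE}\,dx$, split the integral by linearity, and obtain the first form from skew-symmetry of CNIE, i.e., reversing the orientation of the integral.

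\textbf{The regularity paragraph.} It contains a false inference. As stated, Assumption~\ref{SCAS1} only requires $\partial_{x^*}g(x,x^*)$ and $\partial_{x^*}g(x^*,x)$ to be continuous in $x$ for each fixed $x^*$, which is separate continuity. That says nothing about the diagonal maps $x\mapsto\text{LNDE}(x;Y)=\partial_{x^*}g(x^*,x)|_{x^*=x}$ and $x\mapsto\text{LNIE}(x;Y)=\partial_{x^*}g(x,x^*)|_{x^*=x}$, where both arguments move together.

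For a counterexample, take $M:=\phi(X)$ and $Y:=M$, with $\phi(t)=t^2\sin(t^{-2})$ and $\phi(0)=0$. Then $g(x,x^*)=\phi(x^*)$, which satisfies Assumption~\ref{SCAS1} and Lemma~\ref{lemma1}. Yet $\text{LNIE}(x;Y)=\text{LTE}(x;Y)=\phi'(x)$ is unbounded near $0$. So it is neither continuous nor Riemann integrable on any interval containing $0$, and your FTC step is not justified as written.

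This paragraph therefore does not discharge the delicate point. You have two options:
\begin{enumerate}
\item Take existence of the integrals and $\text{TE}=\int\text{LTE}$ as implicit hypotheses. This is what the paper does: CNDE and CNIE are defined as these integrals, and $\text{TE}=\int\text{LTE}$ is asserted in Section~2.
\item Strengthen the assumption to joint continuity of both partial derivatives of $g$. Then $g$ is continuously differentiable, which yields both the total differentiability used in Lemma~\ref{lemma1} and your FTC argument.
\end{enumerate}
With either fix, your argument is complete.
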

\noindent We note that the additive decomposition of TE into NDE and NIE only holds for the linear no-interaction model. In contrast, the additive decomposition of TE into CNDE and CNIE in Eq.~(\ref{eq-decom-c}) holds for any nonparametric model.
In addition, since CNDE and CNIE satisfy skew-symmetry, the proportion of direct
or indirect effect in the total effect does not change when $x''$ and $x'$ are flipped
\begin{align}
\frac{\text{CNDE}(x'',x';Y)}{\text{TE}(x'',x';Y)}=\frac{\text{CNDE}(x',x'';Y)}{\text{TE}(x',x'';Y)},\ \  
\frac{\text{CNIE}(x'',x';Y)}{\text{TE}(x'',x';Y)}=\frac{\text{CNIE}(x',x'';Y)}{\text{TE}(x',x'';Y)}
\end{align}



Finally, by Theorem \ref{cor3}, we can identify CNDE and CNIE as follows. 
\begin{restatable}{theorem}{Theoremfour}[Identification of CNDE and CNIE]
\label{theo4}
Under SCM $\mathcal{M}_{N}$ and Assumptions \ref{SCAS2} and \ref{SCAS1}, 
for any $x', x'' \in \Omega_X$,
CNDE and CNIE are identifiable by
\begin{equation}
\begin{aligned}
&\text{CNDE}(x'',x';Y)=\int_{x'}^{x''}\int_{\Omega_{M}}\partial_x \mathbb{E}[Y|X=x,M=m]\mathfrak{p}(M=m|X=x)d{m}dx,
\end{aligned}
\end{equation}
\begin{equation}
\begin{aligned}
&\text{CNIE}(x'',x';Y)=\int_{x'}^{x''}\int_{\Omega_M}\partial_{m}\mathbb{E}[Y|X=x,M=m]\partial_x \mathbb{P}(M>m|X=x)dmdx.
\end{aligned}
\end{equation}
\end{restatable}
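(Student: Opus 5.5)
The plan is to reduce Theorem~\ref{theo4} to the pointwise identification of Theorem~\ref{cor3} and then integrate over the treatment path. By definition,
\[
\text{CNDE}(x'',x';Y)=\int_{x'}^{x''}\text{LNDE}(x;Y)\,dx,\qquad \text{CNIE}(x'',x';Y)=\int_{x'}^{x''}\text{LNIE}(x;Y)\,dx.
\]
Theorem~\ref{cor3} gives, for every $x\in\Omega_X$, an expression for $\text{LNDE}(x;Y)$ and $\text{LNIE}(x;Y)$ in terms of observational quantities only. These are $\partial_x\mathbb{E}[Y|X=x,M=m]$, $\mathfrak{p}(M=m|X=x)$ and $\partial_x\mathbb{P}(M>m|X=x)$. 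Substituting these expressions into the integrands yields exactly the two displayed double integrals.

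Since the integrands are functions of the observational distribution for each $x$, the integrals are functionals of $\mathbb{P}_{\boldsymbol V}$, which is identifiability. Skew-symmetric orientation ($x''<x'$) needs no special treatment, because the substitution is pointwise and the oriented integral is unchanged.

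The key steps, in order, are the following.
\begin{enumerate}
\item Invoke Proposition~\ref{prop1} under Assumption~\ref{SCAS2} to write $\mathbb{E}[Y_{x,M_{x^*}}]=\int \mathbb{E}[Y|X=x,M=m]\,\mathfrak{p}(M=m|X=x^*)\,dm$.
\item Recall, as in Theorem~\ref{cor3}, that differentiating in the first argument at $x^*=x$ gives the LNDE formula. Differentiating in the second argument, followed by integration by parts in $m$, turns $\partial_{x^*}\mathfrak{p}$ into $-\partial_m$ acting on the conditional survival function. This gives the LNIE formula.
\item Integrate these pointwise identities over $[x',x'']$.
\end{enumerate}

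The main obstacle is that the statement of Theorem~\ref{theo4} does not repeat the regularity conditions (i)/(ii) of Theorem~\ref{cor3}. I would either read them as implicitly inherited, since the proposal relies on Theorem~\ref{cor3}, or note that the LNDE formula needs only differentiation under the integral, which is justified by Assumption~\ref{SCAS1}. The LNIE formula genuinely needs boundary terms to vanish in the integration by parts, so I would state that the same regularity condition is assumed.

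A second point is integrability in $x$. LNDE and LNIE are integrable on the compact interval because Assumption~\ref{SCAS1} makes them continuous in $x$: $\partial_{x^*}\mathbb{E}[Y_{x^*,M_x}]$ and $\partial_{x^*}\mathbb{E}[Y_{x,M_{x^*}}]$ are continuous in $x$. Hence the Riemann integrals defining CNDE and CNIE exist, and the outer integrals of the identified expressions are well defined. This completes the argument.
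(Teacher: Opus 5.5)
Your argument matches the paper's proof, which simply substitutes the pointwise formulas of Theorem~\ref{cor3} into the definitions of CNDE and CNIE; you are also right that regularity condition (i)/(ii) has to be tacitly inherited from Theorem~\ref{cor3}. One caveat on your extra integrability paragraph, which is not needed because the definition of CNDE/CNIE already presupposes the integrals: its justification is wrong, since Assumption~\ref{SCAS1} gives continuity of $\partial_{x^*}\mathbb{E}[Y_{x^*,M_x}]$ in $x$ only for fixed $x^*$, and this does not make the diagonal $x\mapsto\text{LNDE}(x;Y)$ continuous (take $Y:=\psi(X)+\epsilon_Y$ with $\psi(a)=a^2\sin(1/a)$, $\psi(0)=0$, which satisfies the assumption yet has $\text{LNDE}(x;Y)=\psi'(x)$ discontinuous at $0$), so that paragraph should be dropped or replaced by an explicit integrability assumption.
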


{\bf Remark: Covariate Adjustment.}
Assumption~\ref{SCAS2} may be overly restrictive for observational studies. In Appendix~\ref{SECB}, we provide identification results that incorporate observed confounders $W$. 




\subsection{Revisiting Popular Mediation Models}

Next, we revisit the three specific models and illustrate how TE is decomposed into CNDE and CNIE.

\begin{customeg}{1'}   
Under the linear SCM $\mathcal{M}_1$ in (\ref{eq-m1}), we have 
\begin{equation}
\begin{aligned}
&\text{LTE}(x;Y)=\beta+\alpha\gamma,\ \ 
\text{LNDE}(x;Y)=\beta,\ \ 
\text{LNIE}(x;Y)=\alpha\gamma.
\end{aligned}
\end{equation} 
TE is decomposed into  CNDE and CNIE as below:
\begin{equation}
\begin{aligned}
&\underbrace{(\beta+\alpha\gamma)(x''-x')}_{\text{TE}(x'',x';Y)}=
\underbrace{\beta(x''-x')}_{\text{CNDE}(x'',x';Y)}-\underbrace{\alpha\gamma(x'-x'')}_{\text{CNIE}(x',x'';Y)}.
\end{aligned}
\end{equation}
We have that CNDE and CNIE coincide with NDE and NIE, respectively.
\end{customeg}

\begin{customeg}{2'}   
Consider the SCM $\mathcal{M}_2$ in (\ref{eq-m2}). 
We have 
\begin{align}\text{LTE}(x;Y) = \beta + \alpha\gamma + 2\alpha\delta x, \ \text{LNDE}(x;Y) = \beta + \alpha\delta x, \ \text{LNIE}(x;Y) = \alpha\gamma + \alpha\delta x.
\end{align}
The decomposition of TE into CNDE and CNIE becomes 
\begin{equation}\label{eq-ex2cnde}
\begin{aligned}
&\underbrace{(\beta+\alpha\gamma)(x''-x')+\alpha\delta (x''^2-x'^2)}_{\text{TE}(x'',x';Y)}\\
&=\underbrace{\{\beta(x''-x')+0.5\alpha\delta (x''^2-x'^2)\}}_{\text{CNDE}(x'',x';Y)}-\underbrace{\{\alpha\gamma(x'-x'') +0.5\alpha\delta (x'^2-x''^2)\}}_{\text{CNIE}(x',x'';Y)}\\
&=\underbrace{\{\beta(x''-x')+0.5\alpha\delta (x''^2-x'^2)\}}_{\text{CNDE}(x'',x';Y)}+\underbrace{\{\alpha\gamma(x''-x') +0.5\alpha\delta (x''^2-x'^2)\}}_{\text{CNIE}(x'',x';Y)}.
\end{aligned}
\end{equation}
This decomposition differs from the corresponding decomposition into NDE and NIE in Eq.~\eqref{EXA1}. 
Concretely, supposing $\beta=0$, $\alpha=1$, $\gamma=0$, and  $\delta=1$, then $\text{TE}(1,0;Y)$ is decomposed by 
\begin{equation}
\begin{aligned}
\underbrace{1}_{\text{TE}(1,0;Y)}=\underbrace{0.5}_{\text{CNDE}(1,0;Y)}-\underbrace{(-0.5)}_{\text{CNIE}(0,1;Y)}=\underbrace{0.5}_{\text{CNDE}(1,0;Y)}+\underbrace{0.5}_{\text{CNIE}(1,0;Y)}.
\end{aligned}
\end{equation} 
Meanwhile, $\text{TE}(0,1;Y)$ is decomposed by
\begin{equation}
\begin{aligned}
&\underbrace{-1}_{\text{TE}(0,1;Y)}=\underbrace{-0.5}_{\text{CNDE}(0,1;Y)}-\underbrace{0.5}_{\text{CNIE}(1,0;Y)}=\underbrace{-0.5}_{\text{CNDE}(0,1;Y)}+\underbrace{(-0.5)}_{\text{CNIE}(0,1;Y)}.
\end{aligned}
\end{equation} 
For the cigarette example, the above results say that
\begin{center}
``{\it Half of the effect of increasing cigarette consumption by one unit from the reference level is via the effect of increasing blood pressure; also, similarly, half of the effect of decreasing cigarette consumption by one unit from the reference level is via decreasing blood pressure.}"
\end{center}
In contrast to the conventional decomposition into NDE and NIE in Eq.~\eqref{eq-m2nde}, the decomposition into CNDE and CNIE does not exhibit  paradoxical phenomena.

\end{customeg}

\begin{customeg}{3'}   

Consider the SCM $\mathcal{M}_3$ in (\ref{eq-m3}). 
We have 
\begin{align}
\text{LTE}(x;Y) = \beta + \alpha\gamma + 3\alpha\delta x^2,\ \text{LNDE}(x;Y) = \beta + 2\alpha\delta x^2,\ \text{LNIE}(x;Y) = \alpha\gamma + \alpha\delta x^2.
\end{align}
The decomposition of TE into CNDE and CNIE becomes
\begin{equation}
\begin{aligned}
&\underbrace{(\beta+\alpha\gamma)(x''-x')+\alpha\delta (x''^3-x'^3)}_{\text{TE}(x'',x';Y)}\\
&=\underbrace{\left\{\beta(x''-x')+\frac{2}{3}\alpha\delta (x''^3-x'^3)\right\}}_{\text{CNDE}(x'',x';Y)}-\underbrace{\left\{\alpha\gamma(x'-x'')+\frac{1}{3}\alpha\delta (x'^3-x''^3)\right\}}_{\text{CNIE}(x',x'';Y)}\\
&=\underbrace{\left\{\beta(x''-x')+\frac{2}{3}\alpha\delta (x''^3-x'^3)\right\}}_{\text{CNDE}(x'',x';Y)}+\underbrace{\left\{\alpha\gamma(x''-x')+\frac{1}{3}\alpha\delta (x''^3-x'^3)\right\}}_{\text{CNIE}(x'',x';Y)}.
\end{aligned}
\end{equation} 
This decomposition differs from the corresponding decomposition into NDE and NIE in Eq.~\eqref{eq-m3nde}. 
Concretely, supposing $\beta=0$, $\alpha=1$, $\gamma=0$, and  $\delta=1$, then 
$\text{TE}(1,-1;Y)$ is decomposed by 
\begin{equation}\label{eq-m33e1}
\begin{aligned}
\underbrace{2}_{\text{TE}(1,-1;Y)}=\underbrace{4/3}_{\text{CNDE}(1,-1;Y)}-\underbrace{(-2/3)}_{\text{CNIE}(-1,1;Y)}=\underbrace{4/3}_{\text{CNDE}(1,-1;Y)}+\underbrace{2/3}_{\text{CNIE}(1,-1;Y)}.
\end{aligned}
\end{equation} 
Meanwhile, $\text{TE}(1,0;Y)$ and $\text{TE}(0,-1;Y)$ are decomposed by
\begin{equation}\label{eq-m33e2}
\begin{aligned}
&\underbrace{1}_{\text{TE}(1,0;Y)}=\underbrace{{2}/{3}}_{\text{CNDE}(1,0;Y)}-\underbrace{\left(-{1}/{3}\right)}_{\text{CNIE}(0,1;Y)}=\underbrace{{2}/{3}}_{\text{CNDE}(1,0;Y)}+\underbrace{{1}/{3}}_{\text{CNIE}(1,0;Y)},\\ 
&\underbrace{1}_{\text{TE}(0,-1;Y)}=\underbrace{{2}/{3}}_{\text{CNDE}(0,-1;Y)}-\underbrace{\left(-{1}/{3}\right)}_{\text{CNIE}(-1,0;Y)}=\underbrace{{2}/{3}}_{\text{CNDE}(0,-1;Y)}+\underbrace{{1}/{3}}_{\text{CNIE}(0,-1;Y)}.
\end{aligned}
\end{equation} 
In contrast to the decomposition into NDE and NIE in Eqs.~\eqref{eq-m3e1} and~\eqref{eq-m3e2}, the decomposition into CNDE and CNIE does not exhibit  paradoxical phenomena as they satisfy additivity: 
\begin{equation}
\begin{aligned}
&\text{CNDE}(1,-1;Y)=\text{CNDE}(1,0;Y)+\text{CNDE}(0,-1;Y),
\end{aligned}
\end{equation}
\begin{equation}
\begin{aligned}
&\text{CNIE}(1,-1;Y)=\text{CNIE}(1,0;Y)+\text{CNIE}(0,-1;Y).
\end{aligned}
\end{equation}
For the cigarette example, the decompositions in \eqref{eq-m33e1} and \eqref{eq-m33e2} imply that
\begin{center}
``One-third of the effect of each of the three treatment contrasts, an increase of one unit above the reference level, an increase of one unit from below the reference level to the reference level, and an increase of two units across the reference level, is mediated through increased blood pressure, while the remaining two-thirds operates directly.''
\end{center}
\end{customeg}

We observe that our proposed decomposition of TE into CNDE and CNIE avoids the paradoxes exhibited by the decomposition into NDE and NIE discussed in Section~\ref{sec-paradox}. 



\section{Direct and Indirect Effects for Ordinal Treatment}
When the treatment $X$ is a discrete variable, our definitions of CNDE and CNIE  are not applicable  because the derivative is not defined in such cases. The natural counterparts of CNDE and CNIE can be defined using finite differences rather than integrals. 

\subsection{CNDE and CNIE for ordinal treatment}

If $X$ is ordinal taking values in $\{x_0, x_1, \dots, x_K\}$, the analogue of CNDE/CNIE can be expressed as sums of local increments along a sequence connecting $x_i$ to $x_j$. We define the counterparts of CNDE and CNIE for the ordinal $X$, denoted by CNDE-O and CNIE-O, as follows:
\begin{definition}[CNDE and CNIE for ordinal $X$]
Let $X$ be an ordinal taking values in $\Omega_X=\{x_0, x_1, \dots, x_K\}$. 
We define CNDE and CNIE for the ordinal $X$, CNDE-O and CNIE-O, as follows: 
\begin{align}
 \text{CNDE-O}(x_j,x_i;Y) &\defeq 
 \begin{cases}
   \sum_{k=i}^{j-1} \text{NDE}(x_{k+1},x_k;Y) & \text{if } i<j\\  
   \sum_{k=j}^{i-1} \text{NDE}(x_{k},x_{k+1};Y) & \text{if } i>j\\  
 \end{cases},\\
 \text{CNIE-O}(x_j,x_i;Y) &\defeq 
 \begin{cases}
   \sum_{k=i}^{j-1} \text{NIE}(x_{k+1},x_k;Y) & \text{if } i<j\\  
   \sum_{k=j}^{i-1} \text{NIE}(x_{k},x_{k+1};Y) & \text{if } i>j\\  
 \end{cases}, 
\end{align}
for any $x_i, x_j \in \Omega_X$.
\end{definition}
Thus, the discrete analogue replaces integration of local derivatives with summation of finite-step contrasts. In the case of $|i-j|=1$, this reduces to a single-step decomposition, making CNDE-O/CNIE-O coincide with NDE/NIE.

CNDE-O and CNIE-O do not satisfy skew-symmetry because NDE and NIE do not. CNDE-O and CNIE-O satisfy a restricted form of additivity as stated in the following: 

\begin{restatable}{theorem}{Theoremfive}[Additivity of CNDE-O and CNIE-O]
Under SCM $\mathcal{M}_{N}$, 
CNDE-O and CNIE-O satisfy: 
if $x' < x'' < x'''$ or $x' > x'' > x'''$, then
\begin{align}
 \text{CNDE-O}(x''',x';Y)&= \text{CNDE-O}(x''',x'';Y)+ \text{CNDE-O}(x'',x';Y),\\
 \text{CNIE-O}(x''',x';Y)&= \text{CNIE-O}(x''',x'';Y)+ \text{CNIE-O}(x'',x';Y).
\end{align}
\end{restatable}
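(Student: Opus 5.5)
The argument is purely combinatorial: it reindexes the finite sums in the definition of CNDE-O and CNIE-O and needs no structural assumption beyond $\mathcal{M}_N$ (so that the NDE and NIE terms are defined). We write $x'=x_i$, $x''=x_j$, $x'''=x_l$ and treat the two monotone orderings separately. The argument for CNIE-O is word-for-word the same as for CNDE-O with NIE in place of NDE, so we present only CNDE-O.

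\emph{Increasing case} $x'<x''<x'''$, i.e.\ $i<j<l$. All three quantities use the first branch of the definition, and the index range splits as $\{i,\dots,l-1\}=\{i,\dots,j-1\}\cup\{j,\dots,l-1\}$ disjointly. Hence
\begin{align}
\text{CNDE-O}(x_l,x_i;Y)&=\sum_{k=i}^{l-1}\text{NDE}(x_{k+1},x_k;Y)\\
&=\sum_{k=j}^{l-1}\text{NDE}(x_{k+1},x_k;Y)+\sum_{k=i}^{j-1}\text{NDE}(x_{k+1},x_k;Y),
\end{align}
and the right-hand side equals $\text{CNDE-O}(x_l,x_j;Y)+\text{CNDE-O}(x_j,x_i;Y)$.

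\emph{Decreasing case} $x'>x''>x'''$, i.e.\ $i>j>l$. All three quantities use the second branch, whose summands are $\text{NDE}(x_k,x_{k+1};Y)$. The relevant ranges are $\{l,\dots,i-1\}$ for the full contrast, $\{l,\dots,j-1\}$ for $(x_l,x_j)$, and $\{j,\dots,i-1\}$ for $(x_j,x_i)$. The last two partition the first, so the identity follows in the same way.

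The only point needing care is that the branches must match: in each monotone case, every one of the three contrasts points in the same direction and so uses the same branch. Mixing branches would bring in terms like $\text{NDE}(x_{k+1},x_k;Y)$ alongside $\text{NDE}(x_k,x_{k+1};Y)$, which do not cancel because NDE is not skew-symmetric. This is exactly why the theorem is restricted to monotone triples. We will add a sentence noting this, but it requires no further argument.
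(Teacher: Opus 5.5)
Your proof is correct and follows essentially the same route as the paper: set $x'=x_i$, $x''=x_j$, $x'''=x_\ell$, and in each monotone case split the index range of the full contrast into the two disjoint subranges for the adjacent contrasts, using the same branch of the definition throughout. Your remark that the branches must match because NDE and NIE are not skew-symmetric explains why the statement is restricted to monotone triples, a point the paper leaves implicit.
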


We have the following decomposition theorem:
\begin{restatable}{theorem}{Theoremsix}[Decomposition of TE into CNDE-O and CNIE-O]
Under SCM $\mathcal{M}_{N}$, 
\begin{align}
\text{TE}(x'',x';Y)
&=\text{CNDE-O}(x'',x';Y)-\text{CNIE-O}(x',x'';Y).
\end{align}
\end{restatable}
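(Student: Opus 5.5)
Write $x'=x_i$ and $x''=x_j$. The plan is a telescoping argument built on the single-step identity $\text{TE}(x'',x';Y)=\text{NDE}(x'',x';Y)-\text{NIE}(x',x'';Y)$, which holds for any pair of treatment values under $\mathcal{M}_N$. We first check this identity directly from Definition~\ref{def1}. By consistency, $Y_{x,M_x}=Y_x$. Hence
\[
\text{NDE}(x'',x';Y)-\text{NIE}(x',x'';Y)=\mathbb{E}[Y_{x'',M_{x'}}]-\mathbb{E}[Y_{x'}]-\mathbb{E}[Y_{x'',M_{x'}}]+\mathbb{E}[Y_{x''}]=\mathbb{E}[Y_{x''}]-\mathbb{E}[Y_{x'}].
\]
No assumptions beyond the SCM are needed.

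We split into three cases according to the order of $i$ and $j$. The case $i=j$ is trivial, taking the empty sums to be zero, so both sides vanish.

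\textbf{Case $i<j$.} By definition, $\text{CNDE-O}(x_j,x_i;Y)=\sum_{k=i}^{j-1}\text{NDE}(x_{k+1},x_k;Y)$. The second term has its arguments in descending order, $\text{CNIE-O}(x_i,x_j;Y)$ with $j>i$, so it falls under the second branch of the definition: it equals $\sum_{k=i}^{j-1}\text{NIE}(x_k,x_{k+1};Y)$. The difference therefore regroups term by term as
\[
\sum_{k=i}^{j-1}\bigl[\text{NDE}(x_{k+1},x_k;Y)-\text{NIE}(x_k,x_{k+1};Y)\bigr]=\sum_{k=i}^{j-1}\text{TE}(x_{k+1},x_k;Y),
\]
using the single-step identity with $x''=x_{k+1}$ and $x'=x_k$. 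This sum telescopes to $\mathbb{E}[Y_{x_j}]-\mathbb{E}[Y_{x_i}]$, which is the additivity of TE.

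\textbf{Case $i>j$.} Here $\text{CNDE-O}(x_j,x_i;Y)=\sum_{k=j}^{i-1}\text{NDE}(x_k,x_{k+1};Y)$, and $\text{CNIE-O}(x_i,x_j;Y)$ falls under the first branch, giving $\sum_{k=j}^{i-1}\text{NIE}(x_{k+1},x_k;Y)$. Pairing the terms gives
\[
\sum_{k=j}^{i-1}\bigl[\text{NDE}(x_k,x_{k+1};Y)-\text{NIE}(x_{k+1},x_k;Y)\bigr]=\sum_{k=j}^{i-1}\text{TE}(x_k,x_{k+1};Y),
\]
now applying the identity with $x''=x_k$ and $x'=x_{k+1}$. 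This telescopes to $\mathbb{E}[Y_{x_j}]-\mathbb{E}[Y_{x_i}]$.

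The only real care needed is the index bookkeeping. When the arguments of CNIE-O are swapped, one must confirm that the correct branch of its definition is selected, and that its summation range matches the CNDE-O range term by term, so that each pair is exactly an NDE/NIE single-step decomposition with consistent orientation. I expect no other obstacle.
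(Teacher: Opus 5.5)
Your proof is correct and follows the paper's argument essentially line for line: the same split into $i<j$ and $i>j$, the same branch selection for $\text{CNIE-O}(x_i,x_j;Y)$, and the same termwise pairing into single-step $\text{TE}$ terms that telescope. Your explicit check of the one-step identity $\text{TE}(x'',x';Y)=\text{NDE}(x'',x';Y)-\text{NIE}(x',x'';Y)$ from Definition~\ref{def1}, and your handling of the trivial case $i=j$, are small additions that the paper leaves implicit.
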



Next, we revisit the three models $\mathcal{M}_1$, $\mathcal{M}_2$, and $\mathcal{M}_3$.

\begin{customeg}{1''}  
Under $\mathcal{M}_1$ in (\ref{eq-m1}), for $x'=x_i$ and $x''=x_j$ with $i<j$, we have $\text{NDE}(x_{k+1},x_k;Y) = \beta(x_{k+1}-x_k)$ and $\text{NIE}(x_{k+1},x_k;Y) = \alpha\gamma(x_{k+1}-x_k)$. Then CNDE-O and CNIE-O are given by
\begin{equation}
\begin{aligned}
\text{CNDE-O}(x'',x';Y) &= \sum_{k=i}^{j-1}\beta(x_{k+1}-x_k) = \beta(x_j-x_i) = \beta(x''-x'),\\
\text{CNIE-O}(x',x'';Y) &= \sum_{k=i}^{j-1}\alpha\gamma(x_k-x_{k+1}) = \alpha\gamma(x_i-x_j) = \alpha\gamma(x'-x'').
\end{aligned}
\end{equation}
They coincide with NDE/CNDE and NIE/CNIE respectively. 
\end{customeg}

\begin{customeg}{2''}
Under $\mathcal{M}_2$ in (\ref{eq-m2}), 
we have 
$\text{NDE}(x_{k+1},x_k;Y)=(\beta+\alpha\delta x_k)(x_{k+1}-x_k)$ and 
$\text{NIE}(x_k,x_{k+1};Y)=(\alpha\gamma+\alpha\delta x_{k+1})(x_k-x_{k+1})$. 
Then for $i<j$, CNDE-O and CNIE-O are given by
\begin{equation}
\begin{aligned}
&\text{CNDE-O}(x_j,x_i;Y) = \sum_{k=i}^{j-1}(\beta+\alpha\delta x_k)(x_{k+1}-x_k),\\
&\text{CNIE-O}(x_i,x_j;Y)
= \sum_{k=i}^{j-1}(\alpha\gamma+\alpha\delta x_{k+1})(x_k-x_{k+1}).
\end{aligned}
\end{equation}
Supposing $\beta=0$, $\alpha=1$, $\gamma=0$, $\delta=1$, and $X\in\{-1,0,1\}$, then
\begin{equation}
\begin{aligned}
&\underbrace{0}_{\text{TE}(1,-1;Y)}=\underbrace{-1}_{\text{CNDE-O}(1,-1;Y)}-\underbrace{(-1)}_{\text{CNIE-O}(-1,1;Y)},\\
&\underbrace{1}_{\text{TE}(1,0;Y)}=\underbrace{0}_{\text{CNDE-O}(1,0;Y)}-\underbrace{(-1)}_{\text{CNIE-O}(0,1;Y)},\\
&\underbrace{-1}_{\text{TE}(0,-1;Y)}=\underbrace{-1}_{\text{CNDE-O}(0,-1;Y)}-\underbrace{0}_{\text{CNIE-O}(-1,0;Y)}.
\end{aligned}
\end{equation}
We can verify that the additivity holds.
\end{customeg}

\begin{customeg}{3''}  
Under $\mathcal{M}_3$ in (\ref{eq-m3}), 
we have 
$\text{NDE}(x_{k+1},x_k;Y)=\{\beta+\alpha\delta(x_{k+1}+x_k)x_k\}(x_{k+1}-x_k)$ and 
$\text{NIE}(x_k,x_{k+1};Y)=\{\alpha\gamma+\alpha\delta x_{k+1}^2\}(x_k-x_{k+1})$. 
Then for $i<j$, CNDE-O and CNIE-O are given by
\begin{equation}
\begin{aligned}
\text{CNDE-O}(x_j,x_i;Y)
&= \sum_{k=i}^{j-1}\{\beta+\alpha\delta (x_{k+1}+x_k)x_k\}(x_{k+1}-x_k),\\
\text{CNIE-O}(x_i,x_j;Y)
&= \sum_{k=i}^{j-1}\{\alpha\gamma+\alpha\delta x_{k+1}^2\}(x_k-x_{k+1}).
\end{aligned}
\end{equation}
Suppose $\beta=0$, $\alpha=1$, $\gamma=0$, $\delta=1$, and $X\in\{-1,0,1\}$. Then
\begin{equation}
\begin{aligned}
&\underbrace{2}_{\text{TE}(1,-1;Y)}=\underbrace{1}_{\text{CNDE-O}(1,-1;Y)}-\underbrace{(-1)}_{\text{CNIE-O}(-1,1;Y)},\\
&\underbrace{1}_{\text{TE}(1,0;Y)}=\underbrace{0}_{\text{CNDE-O}(1,0;Y)}-\underbrace{(-1)}_{\text{CNIE-O}(0,1;Y)},\\
&\underbrace{1}_{\text{TE}(0,-1;Y)}=\underbrace{1}_{\text{CNDE-O}(0,-1;Y)}-\underbrace{0}_{\text{CNIE-O}(-1,0;Y)}.
\end{aligned}
\end{equation}
We can verify that the additivity holds.
\end{customeg}


\subsection{Skew-Symmetric NDE and NIE}

To construct skew-symmetric counterparts of CNDE and CNIE,  
we first define a skew-symmetric version of NDE and NIE, denoted by S-NDE and S-NIE, using the average of the forward and backward endpoint evaluations.
\begin{definition}[S-NDE and S-NIE]
We define a skew-symmetric version of NDE and NIE (S-NDE and S-NIE) as follows:
\begin{align}
 \text{S-NDE}(x'',x';Y) 
  &\defeq \frac{1}{2}\Big(\mathbb{E}[Y_{x'',M_{x'}}] - \mathbb{E}[Y_{x'}]  + \mathbb{E}[Y_{x''}] - \mathbb{E}[Y_{x',M_{x''}}]\Big)\\
&= \frac{1}{2}\Big(\text{NDE}(x'',x';Y) - \text{NDE}(x',x'';Y)\Big).
\end{align} 
\begin{align}
 \text{S-NIE}(x'',x';Y)
 &\defeq \frac{1}{2}\Big(\mathbb{E}[Y_{x',M_{x''} }] - \mathbb{E}[Y_{x'}] + \mathbb{E}[Y_{x''}] - \mathbb{E}[Y_{x'',M_{x'}}]    \Big)\\
 &=\frac{1}{2}\Big(\text{NIE}(x'',x';Y) - \text{NIE}(x',x'';Y)\Big).
\end{align} 
\end{definition}
S-NDE and S-NIE  satisfy the skew-symmetry. However, they do not satisfy the additivity. 
Further, we have the following decomposition theorem: 
\begin{restatable}{theorem}{Theoremseven}[Decomposition of TE into S-NDE and S-NIE]
Under SCM $\mathcal{M}_{N}$, 
\begin{align}
\text{TE}(x'',x';Y)
&=\text{S-NDE}(x'',x';Y)-\text{S-NIE}(x',x'';Y)\\
&=\text{S-NDE}(x'',x';Y)+\text{S-NIE}(x'',x';Y)
\end{align}
for any $x',x'' \in \Omega_X$.
\end{restatable}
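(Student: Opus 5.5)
The plan is to verify both equalities by direct algebra on the expectations of nested counterfactuals, using nothing beyond the definitions of S-NDE, S-NIE, and TE. To keep the bookkeeping short, write $A \defeq \mathbb{E}[Y_{x'}]$, $B \defeq \mathbb{E}[Y_{x''}]$, $C \defeq \mathbb{E}[Y_{x'',M_{x'}}]$ and $D \defeq \mathbb{E}[Y_{x',M_{x''}}]$. The one fact needed from $\mathcal{M}_N$ is composition: $Y_{x,M_x}=Y_x$ for every $x$. This follows because, in the sub-model $\mathcal{M}_x$, $M$ takes the value $M_x$, so $Y_{x,M_x}$ and $Y_x$ coincide pointwise in $\boldsymbol u$. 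In particular, the terms $\mathbb{E}[Y_{x''}]$ and $\mathbb{E}[Y_{x'}]$ appearing in the definitions are consistent with the nested-counterfactual notation, and no identification assumption is required.

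\textbf{Second equality.} From the definitions we have $\text{S-NDE}(x'',x';Y)=\tfrac12(C-A+B-D)$ and $\text{S-NIE}(x'',x';Y)=\tfrac12(D-A+B-C)$. Adding these, the $C$ and $D$ terms cancel, leaving $\tfrac12(2B-2A)=B-A=\text{TE}(x'',x';Y)$.

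\textbf{First equality.} Here I would use skew-symmetry of S-NIE, which is immediate from its second expression $\tfrac12(\text{NIE}(x'',x';Y)-\text{NIE}(x',x'';Y))$, since that expression changes sign when $x'$ and $x''$ are swapped. This gives $-\text{S-NIE}(x',x'';Y)=\text{S-NIE}(x'',x';Y)$, so the first equality reduces to the second. As a sanity check, computing directly from the definition with the roles of $x'$ and $x''$ exchanged gives $\text{S-NIE}(x',x'';Y)=\tfrac12(C-B+A-D)$, and $\tfrac12(C-A+B-D)-\tfrac12(C-B+A-D)=B-A$, as required.

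I would also confirm that the two displayed forms of each definition agree. This uses $\text{NDE}(x',x'';Y)=D-B$ and $\text{NIE}(x',x'';Y)=C-B$.

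There is no real obstacle. The only point requiring care is keeping the argument order of the nested counterfactuals straight when swapping $x'$ and $x''$, in particular that $\text{NIE}(x',x'';Y)=\mathbb{E}[Y_{x'',M_{x'}}]-\mathbb{E}[Y_{x''}]$.
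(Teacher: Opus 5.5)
Your proof is correct and is essentially the paper's argument. Both verify the identity by direct algebra on the definitions. The paper groups the terms as $\tfrac12\{\text{TE}(x'',x';Y)-\text{TE}(x',x'';Y)\}$, using $\text{TE}(x'',x';Y)=\text{NDE}(x'',x';Y)-\text{NIE}(x',x'';Y)$ and skew-symmetry of TE, whereas you expand into the four raw expectations and cancel. You also make the second equality explicit, which the paper leaves to skew-symmetry of S-NIE. Your appeal to composition is harmless but not needed, since TE, NDE, NIE, and the S-measures are all written directly in terms of $\mathbb{E}[Y_{x'}]$, $\mathbb{E}[Y_{x''}]$, and the two cross-world terms.
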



Next, we revisit the three models $\mathcal{M}_1$, $\mathcal{M}_2$, and $\mathcal{M}_3$.

\begin{customeg}{1'''}
Under $\mathcal{M}_1$ in (\ref{eq-m1}), we have $\text{NDE}(x'',x';Y)=\beta(x''-x')$ and $\text{NIE}(x'',x';Y)=\alpha\gamma(x''-x')$. Then
\begin{equation}
\begin{aligned}
\text{S-NDE}(x'',x';Y) &= \frac{1}{2}\Big(\beta(x''-x')-\beta(x'-x'')\Big) = \beta(x''-x'),\\
\text{S-NIE}(x',x'';Y) &= \frac{1}{2}\Big(\alpha\gamma(x'-x'')-\alpha\gamma(x''-x')\Big) = \alpha\gamma(x'-x'').
\end{aligned}
\end{equation}
They coincide with NDE and NIE respectively. 
\end{customeg}

\begin{customeg}{2'''}
Under $\mathcal{M}_2$ in (\ref{eq-m2}), we have $\text{NDE}(x'',x';Y)=(\beta+\alpha\delta x')(x''-x')$ and $\text{NIE}(x'',x';Y)=(\alpha\gamma+\alpha\delta x')(x''-x')$. Then
\begin{equation}\label{eq-m2snde}
\begin{aligned}
\text{S-NDE}(x'',x';Y) 
&= \Big(\beta+\frac{\alpha\delta(x'+x'')}{2}\Big)(x''-x'),\\
\text{S-NIE}(x',x'';Y) 
&= \Big(\alpha\gamma+\frac{\alpha\delta(x''+x')}{2}\Big)(x'-x'').
\end{aligned}
\end{equation}
Interestingly, the above S-NDE and S-NIE coinside with CNDE and CNIE in Eq.~\eqref{eq-ex2cnde}, respectively. 
Supposing $\beta=0$, $\alpha=1$, $\gamma=0$, $\delta=1$, then $\text{TE}(1,0;Y)$ is decomposed by
\begin{equation}
\begin{aligned}
&\underbrace{1}_{\text{TE}(1,0;Y)}=\underbrace{0.5}_{\text{S-NDE}(1,0;Y)}-\underbrace{(-0.5)}_{\text{S-NIE}(0,1;Y)}=\underbrace{0.5}_{\text{S-NDE}(1,0;Y)}+\underbrace{0.5}_{\text{S-NIE}(1,0;Y)}.
\end{aligned}
\end{equation}
\end{customeg}

\begin{customeg}{3'''}
Under $\mathcal{M}_3$ in (\ref{eq-m3}), we have $\text{NDE}(x'',x';Y)=\beta(x''-x')+\alpha\delta(x''^2-x'^2)x'$ and $\text{NIE}(x'',x';Y)=(\alpha\gamma+\alpha\delta x'^2)(x''-x')$. Then
\begin{equation}\label{eq-ex3snde3}
\begin{aligned}
\text{S-NDE}(x'',x';Y) &= \beta(x''-x')+\frac{\alpha\delta}{2}(x''^2-x'^2)(x''+x'),\\
\text{S-NIE}(x',x'';Y) &=\alpha\gamma(x'-x'') +\frac{\alpha\delta}{2}(x'^2+x''^2)(x'-x'').
\end{aligned}
\end{equation}
Supposing $\beta=0$, $\alpha=1$, $\gamma=0$, $\delta=1$, then $\text{TE}(1,0;Y)$ is decomposed by
\begin{equation}
\begin{aligned}
&\underbrace{1}_{\text{TE}(1,0;Y)}=\underbrace{0.5}_{\text{S-NDE}(1,0;Y)}-\underbrace{(-0.5)}_{\text{S-NIE}(0,1;Y)}=\underbrace{0.5}_{\text{S-NDE}(1,0;Y)}+\underbrace{0.5}_{\text{S-NIE}(1,0;Y)}.
\end{aligned}
\end{equation}
\end{customeg}

\subsection{Skew-symmetric CNDE and CNIE for ordinal treatment}
Finally, we define skew-symmetric counterparts of CNDE and CNIE  for the ordinal $X$. 
\begin{definition}[S-CNDE-O and S-CNIE-O]
Let $X$ be an ordinal taking values in $\Omega_X=\{x_0, x_1, \dots, x_K\}$. We define the skew-symmetric CNDE and CNIE for ordinal $X$, S-CNDE-O and S-CNIE-O, as follows:
\begin{align}
 \text{S-CNDE-O}(x_j,x_i;Y) &\defeq 
 \begin{cases}
   \sum_{k=i}^{j-1} \text{S-NDE}(x_{k+1},x_k;Y) & \text{if } i<j\\  
   \sum_{k=j}^{i-1} \text{S-NDE}(x_{k},x_{k+1};Y) & \text{if } i>j\\  
 \end{cases}\label{eq-defscndeo}\\
 \text{S-CNIE-O}(x_j,x_i;Y) &\defeq 
 \begin{cases}
   \sum_{k=i}^{j-1} \text{S-NIE}(x_{k+1},x_k;Y) & \text{if } i<j\\  
   \sum_{k=j}^{i-1} \text{S-NIE}(x_{k},x_{k+1};Y) & \text{if } i>j\\  
 \end{cases} \label{eq-defscnieo}
\end{align}
\end{definition}

S-CNDE-O and S-CNIE-O satisfy the skew-symmetry and a restricted form of additivity:
\begin{restatable}{theorem}{Theoremeight}[Skew-symmetry and additivity of S-CNDE-O and S-CNIE-O]
Under SCM $\mathcal{M}_{N}$,
(i) S-CNDE-O and S-CNIE-O satisfy skew-symmetry, and (ii) S-CNDE-O and S-CNIE-O satisfy: if $x' < x'' < x'''$ or $x' > x'' > x'''$, then
\begin{align}
 \text{S-CNDE-O}(x''',x';Y)&= \text{S-CNDE-O}(x''',x'';Y)+ \text{S-CNDE-O}(x'',x';Y),\\
 \text{S-CNIE-O}(x''',x';Y)&= \text{S-CNIE-O}(x''',x'';Y)+ \text{S-CNIE-O}(x'',x';Y).
\end{align}
\end{restatable}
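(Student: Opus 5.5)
The argument is purely algebraic and uses only the definitions of S-NDE, S-NIE, S-CNDE-O and S-CNIE-O in Eqs.~\eqref{eq-defscndeo}--\eqref{eq-defscnieo}, together with the skew-symmetry of S-NDE and S-NIE. That skew-symmetry is immediate from their definition: $\text{S-NDE}(x'',x';Y)=\frac{1}{2}(\text{NDE}(x'',x';Y)-\text{NDE}(x',x'';Y))$ changes sign when $x'$ and $x''$ are swapped, and likewise for S-NIE. The proofs for S-CNDE-O and S-CNIE-O are identical with NDE replaced by NIE, so I would write out only the S-CNDE-O case.

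\emph{Step 1 (unified form).} First I would put the definition into a single convenient form. Introduce the one-step increments $d_k \defeq \text{S-NDE}(x_{k+1},x_k;Y)$ for $k=0,\dots,K-1$. For $i<j$ the definition gives $\text{S-CNDE-O}(x_j,x_i;Y)=\sum_{k=i}^{j-1}d_k$. For $i>j$ it gives $\sum_{k=j}^{i-1}\text{S-NDE}(x_k,x_{k+1};Y)$, and by skew-symmetry of S-NDE this equals $-\sum_{k=j}^{i-1}d_k$. Setting the value to $0$ when $i=j$ (consistent with $\text{S-NDE}(x,x;Y)=0$), this yields a potential representation. With $F(i)\defeq\sum_{k=0}^{i-1}d_k$, we have
\begin{equation*}
\text{S-CNDE-O}(x_j,x_i;Y)=F(j)-F(i)\quad\text{for all } i,j.
\end{equation*}

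\emph{Step 2 (conclusions).} Part (i) follows at once, since $F(j)-F(i)=-(F(i)-F(j))$. For part (ii), write $x'=x_i$, $x''=x_l$, $x'''=x_j$. Then $F(j)-F(i)=(F(j)-F(l))+(F(l)-F(i))$ telescopes for any triple. So additivity actually holds unrestricted, and in particular for monotone triplets as stated.

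\emph{Step 3 (the only real obstacle).} The one place requiring care is the $i>j$ branch in Step 1: one must check that the reversed summands $\text{S-NDE}(x_k,x_{k+1};Y)$ are exactly the negatives of $d_k$. This is precisely where skew-symmetry of S-NDE is needed, and it is why the analogous CNDE-O (built from NDE) fails part (i). As a sanity check, I would also verify the monotone case directly: for $i<l<j$, split $\sum_{k=i}^{j-1}$ into $\sum_{k=i}^{l-1}+\sum_{k=l}^{j-1}$, and the decreasing case is symmetric. This mirrors how the additivity of CNDE-O was established and confirms the restricted statement without relying on the potential representation.
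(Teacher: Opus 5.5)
Your proof is correct, but it is organized differently from the paper's. The paper works case by case from the two-branch definition. For (i) it takes $i<j$, negates each summand $\text{S-NDE}(x_{k+1},x_k;Y)$ using skew-symmetry of S-NDE to obtain $-\text{S-CNDE-O}(x_i,x_j;Y)$, and declares the case $i>j$ analogous. For (ii) it splits the sum at the middle index, separately for $i<j<\ell$ and for $i>j>\ell$; this step never uses skew-symmetry and is verbatim the argument given for CNDE-O.

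You instead use skew-symmetry of S-NDE once, up front, to collapse both branches into the potential difference $F(j)-F(i)$. After that, (i) and (ii) are one-line telescoping identities. Your route gives a strictly stronger statement: additivity holds for every triple, including non-monotone ones such as $x'<x'''<x''$. With your convention that the diagonal value is $0$, it also covers triples with repeated points. So the monotonicity hypothesis in (ii), and the starred additivity entries for S-CNDE-O and S-CNIE-O in Table~\ref{tab:properties}, are unnecessary for the skew-symmetrized measures.

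Your route also pinpoints why CNDE-O and CNIE-O only get the restricted version. Their reversed branch is not the negative of the forward one, so no such potential exists. For instance, the non-monotone triple $(x',x'',x''')=(x_0,x_2,x_1)$ would require $\text{NDE}(x_1,x_2;Y)+\text{NDE}(x_2,x_1;Y)=0$. In exchange, the paper's route keeps the additivity argument independent of skew-symmetry, so one proof template covers both the plain and the skew-symmetrized ordinal measures.
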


We have the following decomposition theorem:
\begin{restatable}{theorem}{Theoremnine}[Decomposition of TE into S-CNDE-O and S-CNIE-O]
Under SCM $\mathcal{M}_{N}$, 
\begin{align}
\text{TE}(x'',x';Y)
&=\text{S-CNDE-O}(x'',x';Y)-\text{S-CNIE-O}(x',x'';Y)\\
&=\text{S-CNDE-O}(x'',x';Y)+ \text{S-CNIE-O}(x'',x';Y).
\end{align}
\end{restatable}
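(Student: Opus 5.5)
The plan is to reduce the statement to the single-step decomposition of TE into S-NDE and S-NIE (the preceding theorem), and then telescope along the ordered grid. Fix $x'=x_i$ and $x''=x_j$. If $i=j$ the sums defining S-CNDE-O and S-CNIE-O are empty. We adopt the convention that both vanish, which matches $\text{TE}(x_i,x_i;Y)=0$. Otherwise split into the two cases $i<j$ and $i>j$, each following the corresponding branch of the definitions in Eqs.~\eqref{eq-defscndeo} and \eqref{eq-defscnieo}.

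\textbf{Case $i<j$.} By definition,
\[
\text{S-CNDE-O}(x_j,x_i;Y)+\text{S-CNIE-O}(x_j,x_i;Y)=\sum_{k=i}^{j-1}\big(\text{S-NDE}(x_{k+1},x_k;Y)+\text{S-NIE}(x_{k+1},x_k;Y)\big).
\]
By the decomposition of TE into S-NDE and S-NIE, each summand equals $\text{TE}(x_{k+1},x_k;Y)=\mathbb{E}[Y_{x_{k+1}}]-\mathbb{E}[Y_{x_k}]$. The sum then telescopes to $\mathbb{E}[Y_{x_j}]-\mathbb{E}[Y_{x_i}]=\text{TE}(x_j,x_i;Y)$, which is the second equality.

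\textbf{Case $i>j$.} The same computation uses the second branch, with summands $\text{S-NDE}(x_k,x_{k+1};Y)+\text{S-NIE}(x_k,x_{k+1};Y)=\text{TE}(x_k,x_{k+1};Y)$. Summing over $k=j,\dots,i-1$ telescopes to $\mathbb{E}[Y_{x_j}]-\mathbb{E}[Y_{x_i}]$, which is again $\text{TE}(x_j,x_i;Y)$.

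\textbf{First equality.} This follows from the second via skew-symmetry: part (i) of the preceding theorem gives $\text{S-CNIE-O}(x'',x';Y)=-\text{S-CNIE-O}(x',x'';Y)$. Substituting this into the additive form yields $\text{TE}=\text{S-CNDE-O}(x'',x';Y)-\text{S-CNIE-O}(x',x'';Y)$. As a check that avoids invoking that theorem, the skew-symmetry is immediate from the definitions. Skew-symmetry of S-NIE gives $\text{S-NIE}(x_k,x_{k+1};Y)=-\text{S-NIE}(x_{k+1},x_k;Y)$, and this is exactly the relation between the two branches with $i$ and $j$ swapped.

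There is no real obstacle; all the substance is in the single-step theorem. The only point needing care is the index bookkeeping in the $i>j$ branch: the sum runs over $k=j,\dots,i-1$ with the arguments in reversed order $(x_k,x_{k+1})$, so the telescoping must produce the sign $\mathbb{E}[Y_{x_j}]-\mathbb{E}[Y_{x_i}]$ and not its negative. No identification or differentiability assumptions are needed, only the SCM $\mathcal{M}_N$ and the existence of the relevant counterfactual expectations.
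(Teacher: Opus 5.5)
Your proof is correct and takes essentially the paper's route: reduce to the single-step identity $\text{S-NDE}+\text{S-NIE}=\text{TE}$ and telescope along the ordered levels, separately for $i<j$ and $i>j$. The only difference is order: the paper proves the subtractive form directly by flipping each $\text{S-NIE}(x_k,x_{k+1};Y)$ with skew-symmetry, whereas you prove the additive form first and convert via skew-symmetry of S-CNIE-O; your treatment of the $i=j$ case is a small addition the paper omits.
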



Next, we revisit the three models $\mathcal{M}_1$, $\mathcal{M}_2$, and $\mathcal{M}_3$.

\begin{customeg}{1''''}
Under $\mathcal{M}_1$  in (\ref{eq-m1}), we have S-CNDE-O = S-NDE= CNDE-O = NDE and S-CNIE-O = S-NIE= CNIE-O = NIE. 
\end{customeg}

\begin{customeg}{2''''}
Under $\mathcal{M}_2$ in (\ref{eq-m2}), plugging Eq.~\eqref{eq-m2snde} into \eqref{eq-defscndeo} and \eqref{eq-defscnieo} respectively, we obtain S-CNDE-O = S-NDE, which coincides with CNDE, and S-CNIE-O = S-NIE,  which coincides with CNIE.

\end{customeg}

\begin{customeg}{3''''}
Under $\mathcal{M}_3$ in (\ref{eq-m3}), by Eq.~\eqref{eq-ex3snde3}, 
For $i<j$, S-CNDE-O and S-CNIE-O are given by
\begin{equation}
\begin{aligned}
\text{S-CNDE-O}(x_j,x_i;Y)&=\sum_{k=i}^{j-1}
\left[\beta
+\frac{\alpha\delta}{2}(x_{k+1}+x_k)^2
\right](x_{k+1}-x_k),\\
\text{S-CNIE-O}(x_j,x_i;Y)&=\sum_{k=i}^{j-1}
\left[\alpha\gamma+\frac{\alpha\delta}{2}(x_{k+1}^2+x_k^2)\right](x_{k+1}-x_k).
\end{aligned}
\end{equation}
Suppose $\beta=0$, $\alpha=1$, $\gamma=0$, $\delta=1$, and $X\in\{-1,0,1\}$. Then
\begin{equation}
\begin{aligned}
&\underbrace{2}_{\text{TE}(1,-1;Y)}=\underbrace{1}_{\text{S-CNDE-O}(1,-1;Y)}-\underbrace{(-1)}_{\text{S-CNIE-O}(-1,1;Y)}
=\underbrace{1}_{\text{S-CNDE-O}(1,-1;Y)}+\underbrace{1}_{\text{S-CNIE-O}(1,-1;Y)},\\
&\underbrace{1}_{\text{TE}(1,0;Y)}=\underbrace{1/2}_{\text{S-CNDE-O}(1,0;Y)}-\underbrace{(-1/2)}_{\text{S-CNIE-O}(0,1;Y)}
=\underbrace{1/2}_{\text{S-CNDE-O}(1,0;Y)}+\underbrace{1/2}_{\text{S-CNIE-O}(1,0;Y)},\\
&\underbrace{1}_{\text{TE}(0,-1;Y)}=\underbrace{1/2}_{\text{S-CNDE-O}(0,-1;Y)}-\underbrace{(-1/2)}_{\text{S-CNIE-O}(-1,0;Y)}
=\underbrace{1/2}_{\text{S-CNDE-O}(0,-1;Y)}+\underbrace{1/2}_{\text{S-CNIE-O}(0,-1;Y)}.
\end{aligned}
\end{equation}
We can verify that the skew-symmetry and additivity hold.
\end{customeg}

\subsection{Identification}
All these direct and indirect effect measures for ordinal treatment are constructed from NDE and NIE, and therefore can be identified from observational distributions through the identification of NDE and NIE, for example via Proposition~\ref{prop1}.
Formally,
\begin{restatable}{theorem}{Theoremten}[Identification of direct and indirect effects for ordinal treatment]
Under SCM $\mathcal{M}_N$ and Assumption \ref{SCAS2}, CNDE-O, CNIE-O, S-NDE, S-NIE, S-CNDE-O and S-CNIE-O are all identifiable {from $\mathbb{P}(X, Y, M)$}.
\end{restatable}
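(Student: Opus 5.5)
The plan is to reduce every measure in the statement to a finite linear combination of counterfactual means of the form $\mathbb{E}[Y_{x,M_{x^*}}]$ with $x,x^*\in\Omega_X$, and then invoke Proposition~\ref{prop1} for each term. First observe that $\mathbb{E}[Y_{x}]=\mathbb{E}[Y_{x,M_x}]$ under $\mathcal{M}_N$. This is composition: in the submodel $\mathcal{M}_x$ the mediator takes the value $M_x$, so $Y_x=Y_{x,M_x}$ pointwise in the exogenous variables. Hence every term in Definition~\ref{def1}, namely $\mathbb{E}[Y_{x'',M_{x'}}]$, $\mathbb{E}[Y_{x'}]$, $\mathbb{E}[Y_{x'}]$ with the roles of $x',x''$ swapped, and $\mathbb{E}[Y_{x',M_{x''}}]$, is of the form $\mathbb{E}[Y_{x,M_{x^*}}]$. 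Consequently $\text{NDE}(x'',x';Y)$ and $\text{NIE}(x'',x';Y)$ are each a difference of two such quantities.

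Next, by Proposition~\ref{prop1}, under Assumption~\ref{SCAS2} each $\mathbb{E}[Y_{x,M_{x^*}}]$ equals $\sum_{m}\mathbb{E}[Y|X=x,M=m]\,\mathbb{P}(M=m|X=x^*)$, or the corresponding integral when $M$ is continuous. This is a functional of $\mathbb{P}(X,Y,M)$ alone. The positivity part of Assumption~\ref{SCAS2} guarantees that the conditional expectations and conditional distributions involved are well defined at every $x\in\Omega_X$ and $m\in\Omega_M$. Because $\Omega_X$ is finite here, this gives identification of $\text{NDE}(x_a,x_b;Y)$ and $\text{NIE}(x_a,x_b;Y)$ for all pairs, with the explicit formula
\begin{equation*}
\text{NDE}(x'',x';Y)=\int_{\Omega_M}\big(\mathbb{E}[Y|X=x'',M=m]-\mathbb{E}[Y|X=x',M=m]\big)\mathfrak{p}(M=m|X=x')dm,
\end{equation*}
and an analogous formula for NIE.

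Finally, the remaining measures are finite linear combinations of these identified quantities. $\text{S-NDE}$ and $\text{S-NIE}$ are $\tfrac12$ times differences of two NDEs or two NIEs. $\text{CNDE-O}$ and $\text{CNIE-O}$ are finite sums, with at most $K$ terms, of adjacent-level NDEs and NIEs. $\text{S-CNDE-O}$ and $\text{S-CNIE-O}$ are finite sums of adjacent-level S-NDEs and S-NIEs. A finite linear combination of identifiable functionals is identifiable, which yields the claim. Plugging in gives explicit expressions if desired.

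The only step needing care is the identity $\mathbb{E}[Y_x]=\mathbb{E}[Y_{x,M_x}]$ together with checking that Proposition~\ref{prop1} applies to a discrete $X$, with the density $\mathfrak{p}(X=x)$ replaced by a point mass. No differentiability assumption is needed, which is why Assumption~\ref{SCAS1} is absent from the statement. I expect no genuine obstacle beyond verifying that the positivity conditions of Assumption~\ref{SCAS2} hold at every level $x_k$ used in the sums.
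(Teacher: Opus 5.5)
Your proposal is correct and takes the same route as the paper, whose proof is the one-line remark that the result follows directly from Proposition~\ref{prop1}. You supply the details the paper leaves implicit: the composition identity $\mathbb{E}[Y_x]=\mathbb{E}[Y_{x,M_x}]$, which puts every NDE and NIE term in the form covered by Proposition~\ref{prop1}, and the fact that finite linear combinations of identified quantities are identified.
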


\section{Application}

In this section, we present an application using a real-world dataset.

{\bf Dataset.}
We use the Framingham Heart Study dataset (\url{https://search.r-project.org/CRAN/refmans/riskCommunicator/html/framingham.html}).
The Framingham Heart Study is a landmark prospective study of cardiovascular disease conducted among residents of Framingham, Massachusetts, which was the first to establish the concept of cardiovascular risk factors.
Among the available variables, we consider cigarettes smoked per day as the exposure ($X$), body mass index (BMI) as the mediator ($M$), and heart rate as the outcome ($Y$).
We examine whether the effect of cigarette smoking on heart rate is mediated through BMI.
The analysis is restricted to male participants under 50 years old who smoke and have an education level below high school.
We remove patients with missing values, resulting in a sample size of 224.
We perform 100 bootstrap replications \citep{Efron1979} to obtain the mean and 95\% confidence interval (CI) for each estimator.

{\bf Estimation.}
The direct and indirect effect measures considered in this paper can be estimated by evaluating $\mathbb{E}[Y|X=x,M=m]$ and $\mathfrak{p}(M=m|X=x)$, as indicated in Proposition~\ref{prop1} and Theorem~\ref{theo4}.
We adopt nonlinear models with Gaussian noise, assuming $Y|X=x,M=m \sim \mathcal{N}(\theta(x,m),\sigma_Y)$ and $M|X=x \sim \mathcal{N}(\varphi(x),\sigma_M)$, where $\mathcal{N}(a,b)$ denotes a Gaussian distribution with mean $a$ and standard deviation $b$.
We have $\partial_x\mathbb{E}[Y|X=x,M=m]=\partial_x\theta(x,m)$.
We estimate $\theta$ and $\varphi$ using the local linear estimator \citep{Li2007}, implemented via the R package ``np'' (\url{https://cran.r-project.org/web/packages/np/index.html}).
$\sigma_Y$ and $\sigma_M$ are estimated from the standard deviations of the residuals obtained from the regression models.
Further details of the estimation procedure are provided in Appendix~\ref{SECC}.

\begin{table}[t]
\centering\footnotesize
\caption{Means and 95\% CIs of each estimate. If the quantity is skew-symmetric, we report only one direction.}
\label{tab:my_label1}
\begin{minipage}{0.48\linewidth}
\centering
\begin{tabular}{l|c}
\hline
\multicolumn{2}{l}{\textbf{(1). From 20 to 40}} \\ 
\hline\hline 
$\mathrm{TE}(40,20)$ & $0.524\;[-1.544,\;2.241]$ \\ 
\hline 
$\mathrm{NDE}(40,20)$ & $0.955\;[-1.015,\;2.542]$ \\ 
$\mathrm{NDE}(20,40)$ & $-0.363\;[-2.130,\;1.590]$ \\ 
$\mathrm{NIE}(40,20)$ & $0.161\;[-0.157,\;0.600]$ \\ 
$\mathrm{NIE}(20,40)$ & $0.431\;[-0.191,\;1.179]$ \\ 
\hline 
$\mathrm{CNDE}(40,20)$ & $0.586\;[-1.344,\;2.259]$ \\ 
$\mathrm{CNIE}(40,20)$ & $-0.062\;[-0.486,\;0.350]$ \\ 
\hline 
$\mathrm{S\text{-}NDE}(40,20)$ & $0.659\;[-1.301,\;2.313]$ \\ 
$\mathrm{S\text{-}NIE}(40,20)$ & $-0.135\;[-0.554,\;0.301]$ \\ 
\hline 
$\mathrm{CNDE\text{-}O}(40,20)$ & $0.600\;[-1.330,\;2.269]$ \\ 
$\mathrm{CNDE\text{-}O}(20,40)$ & $-0.572\;[-2.250,\;1.358]$ \\ 
$\mathrm{CNIE\text{-}O}(40,20)$ & $-0.048\;[-0.467,\;0.357]$ \\ 
$\mathrm{CNIE\text{-}O}(20,40)$ & $0.077\;[-0.342,\;0.507]$ \\ 
\hline 
$\mathrm{S\text{-}CNDE\text{-}O}(40,20)$ & $0.586\;[-1.344,\;2.259]$ \\ 
$\mathrm{S\text{-}CNIE\text{-}O}(40,20)$ & $-0.062\;[-0.487,\;0.349]$ \\ 
\hline 
\end{tabular}
\end{minipage}
\begin{minipage}{0.48\linewidth}
\begin{tabular}{l|c}
\hline 
\multicolumn{2}{l}{\textbf{(2). From 20 to 30}} \\ 
\hline\hline 
$\mathrm{TE}(30,20)$ & $0.223\;[-0.691,\;1.112]$ \\ 
\hline 
$\mathrm{NDE}(30,20)$ & $0.228\;[-0.599,\;1.150]$ \\ 
$\mathrm{NDE}(20,30)$ & $-0.113\;[-1.009,\;0.798]$ \\ 
$\mathrm{NIE}(30,20)$ & $0.110\;[-0.066,\;0.369]$ \\ 
$\mathrm{NIE}(20,30)$ & $0.005\;[-0.228,\;0.224]$ \\ 
\hline 
$\mathrm{CNDE}(30,20)$ & $0.163\;[-0.697,\;1.064]$ \\ 
$\mathrm{CNIE}(30,20)$ & $0.060\;[-0.140,\;0.273]$ \\ 
\hline 
$\mathrm{S\text{-}NDE}(30,20)$ & $0.171\;[-0.689,\;1.066]$ \\ 
$\mathrm{S\text{-}NIE}(30,20)$ & $0.052\;[-0.144,\;0.258]$ \\ 
\hline 
$\mathrm{CNDE\text{-}O}(30,20)$ & $0.169\;[-0.687,\;1.073]$ \\ 
$\mathrm{CNDE\text{-}O}(20,30)$ & $-0.158\;[-1.057,\;0.708]$ \\ 
$\mathrm{CNIE\text{-}O}(30,20)$ & $0.066\;[-0.134,\;0.278]$ \\ 
$\mathrm{CNIE\text{-}O}(20,30)$ & $-0.054\;[-0.267,\;0.145]$ \\ 
\hline 
$\mathrm{S\text{-}CNDE\text{-}O}(30,20)$ & $0.163\;[-0.697,\;1.064]$ \\ 
$\mathrm{S\text{-}CNIE\text{-}O}(30,20)$ & $0.060\;[-0.140,\;0.272]$ \\ 
\hline 
\end{tabular}
\end{minipage}\\
\vspace{0.2cm}
\begin{minipage}{0.48\linewidth}
\begin{tabular}{l|c}
\hline \multicolumn{2}{l}{\textbf{(3). From 30 to 40}} \\ 
\hline\hline 
$\mathrm{TE}(40,30)$ & $0.301\;[-0.978,\;1.283]$ \\ 
\hline 
$\mathrm{NDE}(40,30)$ & $0.522\;[-0.624,\;1.462]$ \\ 
$\mathrm{NDE}(30,40)$ & $-0.347\;[-1.315,\;0.877]$ \\ 
$\mathrm{NIE}(40,30)$ & $-0.047\;[-0.252,\;0.159]$ \\ 
$\mathrm{NIE}(30,40)$ & $0.221\;[-0.059,\;0.592]$ \\ 
\hline 
$\mathrm{CNDE}(40,30)$ & $0.423\;[-0.732,\;1.379]$ \\ 
$\mathrm{CNIE}(40,30)$ & $-0.122\;[-0.410,\;0.103]$ \\ 
\hline 
$\mathrm{S\text{-}NDE}(40,30)$ & $0.435\;[-0.732,\;1.389]$ \\ 
$\mathrm{S\text{-}NIE}(40,30)$ & $-0.134\;[-0.421,\;0.095]$ \\ 
\hline 
$\mathrm{CNDE\text{-}O}(40,30)$ & $0.431\;[-0.720,\;1.387]$ \\ 
$\mathrm{CNDE\text{-}O}(30,40)$ & $-0.414\;[-1.372,\;0.750]$ \\ 
$\mathrm{CNIE\text{-}O}(40,30)$ & $-0.113\;[-0.393,\;0.110]$ \\ 
$\mathrm{CNIE\text{-}O}(30,40)$ & $0.131\;[-0.097,\;0.427]$ \\ 
\hline 
$\mathrm{S\text{-}CNDE\text{-}O}(40,30)$ & $0.423\;[-0.733,\;1.379]$ \\ 
$\mathrm{S\text{-}CNIE\text{-}O}(40,30)$ & $-0.122\;[-0.410,\;0.103]$ \\ 
\hline
\end{tabular}
\end{minipage}
\end{table}

{\bf Results.}
Table~\ref{tab:my_label1} reports the estimated means and bootstrap 95\% CIs for the direct and indirect effect measures under the following comparisons: (1) $20$ versus $40$, (2) $20$ versus $30$, and (3) $30$ versus $40$ cigarettes per day. 
Since NDE, NIE, CNDE-O, and CNIE-O are not skew-symmetric, both the $(x'',x')$ and $(x',x'')$ forms are reported for these measures. 
Note that the actual domain of $X$ consists of the discrete integer values from $0$ to $90$. Nevertheless, we fit models that treat $X$ as a continuous variable. CNDE and CNIE are computed using derivatives of the fitted models with respect to $X$. NDE, NIE, S-NDE, and S-NIE are computed by evaluating the fitted models at the corresponding $X$ values. 
CNDE-O, CNIE-O, S-CNDE-O, and S-CNIE-O are computed by treating $X$ as an ordinal variable taking values in $\{0, 1, \ldots, 90\}$. 
{We observe that the estimated S-CNDE-O and S-CNIE-O values are nearly identical to those of CNDE and CNIE, respectively. This is not surprising, since the support of $X$ forms a fine grid, making the ordinal cumulative effects close approximations to their continuous counterparts.} 

The TE estimates are positive for all three contrasts:
$\mathrm{TE}(40,20) = 0.524$, $\mathrm{TE}(30,20) = 0.223$, and
$\mathrm{TE}(40,30) = 0.301$. $\mathrm{TE}(40,20) = \mathrm{TE}(30,20) + \mathrm{TE}(40,30)$, 
confirming the additivity of TE.

{
The NDE and NIE estimates can exhibit paradoxical mediation patterns due to their failure to satisfy skew-symmetry and additivity. For example, $\mathrm{NDE}(40,20)=0.955$ is substantially larger in magnitude than  $\mathrm{NDE}(20,40)=-0.363$, while $\mathrm{NIE}(40,20)=0.161$ is considerably smaller than $\mathrm{NIE}(20,40)=0.431$. Thus, simply reversing the treatment contrast leads to markedly different effect sizes, making interpretation difficult. A similar phenomenon arises when comparing the decomposition of the total effect under opposite treatment contrasts. We have 
\begin{align}
&\underbrace{0.223}_{\text{TE}(30,20)}=\underbrace{0.228}_{\text{NDE}(30,20)}-\underbrace{0.005}_{\text{NIE}(20,30)},
\end{align}
where the total effect is almost entirely attributable to the direct effect. In contrast,
\begin{align}
&\underbrace{-0.223}_{\text{TE}(20,30)}=\underbrace{-0.113}_{\text{NDE}(20,30)}-\underbrace{0.110}_{\text{NIE}(30,20)},
\end{align}
for which the direct and indirect effects contribute approximately equally. These results appear paradoxical because simply reversing the direction of the treatment contrast leads to substantially different conclusions regarding the relative importance of the direct and indirect pathways.} 

{
The lack of additivity is also problematic. For example, 
$\mathrm{NIE}(20,30) + \mathrm{NIE}(30,40) = 0.005 + 0.221 = 0.226$, 
which is substantially smaller than $\mathrm{NIE}(20,40)=0.431$. This result is difficult to interpret, as the estimated indirect effect depends strongly on whether an intermediate treatment level is considered. In other words, the indirect effect for the contrast from 40 to 20 cigarettes per day is not consistent with the sum of the corresponding indirect effects over the adjacent contrasts 40--30 and 30--20, violating a natural notion of additivity.
}

The proposed alternative measures clarify these patterns.
The skew-symmetrized measures S-NDE and S-NIE remove the directional asymmetry of NDE and NIE, while the cumulative measures restore additivity over adjacent treatment intervals.
In particular, additivity holds for CNDE, CNIE, CNDE-O, CNIE-O, S-CNDE-O, and S-CNIE-O.
The CNDE and CNIE estimates indicate a moderate positive direct component with a small negative indirect component for the 20--40 contrast, a positive direct component with a small positive indirect component for the 20--30 contrast, and a positive direct component with a small negative indirect component for the 30--40 contrast.
These decompositions are coherent across intervals: the direct and indirect components of the effect of increasing cigarette consumption from 20 to 40 cigarettes per day on heart rate are equal to  the corresponding sums of the 20--30 and 30--40 components.


\section{Conclusions}

We show that the standard decomposition of the TE into NDE and NIE can give rise to paradoxical interpretations. These paradoxes stem from the failure of NDE and NIE to satisfy two fundamental properties of interpretable causal effects—skew-symmetry and additivity—and can complicate the interpretation of mediation analysis, particularly when comparing effects across treatment contrasts or aggregating effects over multiple levels. Despite their importance, these issues have received limited attention in the literature, possibly because earlier approaches, such as the regression-based framework of \citet{Baron1986}, do not exhibit such behavior under linear models.

To address these limitations, we proposed a new decomposition of the TE based on cumulative natural direct and indirect effects that preserves skew-symmetry and additivity. This formulation provides a coherent and intuitive interpretation of direct and indirect effects across treatment levels and avoids the paradoxes associated with NDE and NIE. 
Our results contribute to the conceptual foundations of mediation analysis by emphasizing the importance of structural properties in effect decomposition. The proposed framework offers a principled alternative for settings with continuous or ordinal treatments, where path-based interpretations are natural and desirable.

The identification results in the paper rely on the sequential ignorability assumption \citep{Imai2010a,Imai2010b}. 
In settings where this assumption is not justified \citep{Ding2016,Miles2019}, sensitivity analysis plays a crucial role in assessing the robustness of conclusions to unmeasured confounding \citep{Imai2010b,Ding2016}. 
Furthermore, instrumental-variable, proximal-variable, and difference-in-differences approaches can facilitate identification in the presence of unmeasured confounding \citep{Rudolph2021,Dukes2023,Huber2026}.
Extending these approaches to our proposed measures is an important avenue for future research. 
Another important extension concerns settings with multiple mediators, which are common in applied studies \citep{Daniel2015,Xia2022,Zhou2023}. Developing cumulative direct and indirect effect measures in the presence of multiple mediators, as well as studying their identification and interpretation, is an interesting direction for future work. 
Efficient estimation is another important topic, particularly in settings with limited sample sizes.
\citet{Tchetgen2012} developed semiparametric theory for causal mediation analysis, while \citet{Huang2024} proposed generalized weighting estimators. Developing efficient estimators for our proposed measures is an important direction for future research.

We hope that the proposed framework will provide a useful and interpretable alternative for mediation analysis and stimulate further research on structurally coherent decompositions of causal effects.

\begin{appendices}



\section{Appendix: Proofs}
\label{SECA}

In the appendix, we provide proofs of the lemmas and theorems in the body of the paper.

\Lemmaone*


\begin{proof}
Under SCM $\mathcal{M}_{N}$ and Assumption \ref{SCAS1}, we consider the function $\mathbb{E}[Y_{x,M_{x^{*}}}]$.
From Assumption \ref{SCAS1}, $\mathbb{E}[Y_{x,M_{x^{*}}}]$ is totally differentiable \citep{Abraham2012} and we have
\begin{equation}
\begin{aligned}
&\mathbb{E}[Y_{x+h,M_{x^{*}}+h}]-\mathbb{E}[Y_{x,M_{x^{*}}}]=h\partial_{x}\mathbb{E}[Y_{x,M_{x^{*}}}]+h\partial_{x^{*}}\mathbb{E}[Y_{x,M_{x^{*}}}]+o(h)
\end{aligned}
\end{equation}
for any $x, x^{*} \in \Omega_X$.
Thus, we have 
\begin{equation}
\begin{aligned}
&\underbrace{\partial_x \mathbb{E}[Y_x]}_{\text{LTE}(x;Y)}= \underbrace{\partial_{x^*} \mathbb{E}[Y_{x,{M}_{x^*}}]|_{x^*=x}}_{\text{LNIE}(x;Y)}+\underbrace{\partial_{x^*} \mathbb{E}[Y_{x^*,{M}_x}]|_{x^*=x}}_{\text{LNDE}(x;Y)}.
\end{aligned}
\end{equation}
for any $x \in \Omega_X$.
\end{proof}

\Theoremone*


\begin{proof}
Let $\mathbbm{1}$ be a delta function.
Under SCM $\mathcal{M}_{N}$, we have
\begin{equation}
\label{A5}
\begin{aligned}
&Y_{x,{M}_{x+\delta}}-Y_{x,{M}_{x}}=\int_{\Omega_{M}}Y_{x,m}\{\mathbbm{1}({M}_{x+\delta}=m)-\mathbbm{1}({M}_{x}=m)\}dm.
\end{aligned}
\end{equation}
for almost every subject. 
Under Assumption \ref{SCAS1}, 
taking expectations on both sides of Eq.~\eqref{A5}, we have
\begin{align}
\label{eq75}
\mathbb{E}[Y_{x,{M}_{x+\delta}}]-\mathbb{E}[Y_{x,{M}_{x}}]
&=\int_{\Omega_{M}}\mathbb{E}[Y_{x,m}]\{\mathfrak{p}({M}_{x+\delta}=m)-\mathfrak{p}({M}_{x}=m)\}dm
\end{align}
Assuming the regularity conditions hold, integration by parts of Eq.~\eqref{eq75} yields 
\begin{align}
\mathbb{E}[Y_{x,{M}_{x+\delta}}]-\mathbb{E}[Y_{x,{M}_{x}}]
&=-\int_{\Omega_{M}}\partial_{m}\mathbb{E}[Y_{x,m}]\{\mathbb{P}({M}_{x+\delta}\leq m)-\mathbb{P}({M}_{x}\leq m)\}dm\\
&=\int_{\Omega_{M}}\partial_{m}\mathbb{E}[Y_{x,m}]\{\mathbb{P}({M}_{x+\delta}> m)-\mathbb{P}({M}_{x}>m)\}dm
\end{align}
for any $x \in \Omega_X$.
Then, taking limit $\delta \rightarrow 0$, we have       
\begin{equation}
\begin{aligned}
&\partial_{x^*}\mathbb{E}[Y_{x,M_{x^*}}]|_{x^*=x}=\int_{\Omega_{M}}\partial_{m}\mathbb{E}[Y_{x,m}]\partial_x\mathbb{P}({M}_{x}> m)dm
\end{aligned}
\end{equation}   
for any $x \in \Omega_X$.
Under Assumption 1, we have
\begin{equation}
\begin{aligned}
&\partial_{x^*} \mathbb{E}[Y_{x,{M}_{x^*}}]|_{x^*=x}=\int_{\Omega_M}\partial_{m}\mathbb{E}[Y|X=x,M=m]\partial_x \mathbb{P}(M>m|x)dm.
\end{aligned}
\end{equation}
for any $x \in \Omega_X$.
Also, under Assumption \ref{SCAS2}, we have $\partial_{x^*} \mathbb{E}[Y_{x^*,{M}_{x}}]=\partial_{x^*} \mathbb{E}[Y_{x^*,m}|{M}_{x}]=\partial_{x^*} \int_{\Omega_M}\mathbb{E}[Y_{x^*,m}|M=m]\mathfrak{p}({M}=m|X=x)dm$, then we have
\begin{equation}
\begin{aligned}
&\partial_{x^*} \mathbb{E}[Y_{x^*,{M}_x}]|_{x^*=x}=\int_{\Omega_{M}} \partial_x \mathbb{E}[Y|X=x,M=m]\mathfrak{p}(M=m|X=x)d{m}.
\end{aligned}
\end{equation}
for any $x \in \Omega_X$.
\end{proof}

\Theoremtwo*


\begin{proof}
We have
\begin{align}
&\text{CNDE}(x'',x';Y)=\int_{x'}^{x''} \text{LNDE}(x;Y) dx=-\int_{x''}^{x'} \text{LNDE}(x;Y) dx=-\text{CNDE}(x',x'';Y),\\
&\text{CNIE}(x'',x';Y)=\int_{x'}^{x''}\text{LNIE}(x;Y)dx=-\int_{x''}^{x'}\text{LNIE}(x;Y)dx=-\text{CNIE}(x',x'';Y),\\
&\text{CNDE}(x''',x'';Y)+\text{CNDE}(x'',x';Y)=\int_{x''}^{x'''} \text{LNDE}(x;Y) dx+\int_{x'}^{x''} \text{LNDE}(x;Y) dx\\
&=\int_{x'}^{x'''} \text{LNDE}(x;Y) dx=\text{CNDE}(x''',x';Y),\\
&\text{CNIE}(x''',x'';Y)+\text{CNIE}(x'',x';Y)=\int_{x''}^{x'''} \text{LNIE}(x;Y) dx+\int_{x'}^{x''} \text{LNIE}(x;Y) dx\\
&=\int_{x'}^{x'''} \text{LNIE}(x;Y) dx=\text{CNIE}(x''',x';Y).
\end{align}
Then, CNDE and CNIE have skew-symmetry and additivity.
\end{proof}

\Theoremthree*


\begin{proof}
This is because 
\begin{align}
\text{TE}(x'',x';Y)
&=\int_{x'}^{x''}\text{LTE}(x;Y)dx\\
&=\int_{x'}^{x''}\{\text{LNDE}(x;Y)+\text{LNIE}(x;Y)\}dx\\
&=\int_{x'}^{x''}\text{LNDE}(x;Y)dx+\int_{x'}^{x''}\text{LNIE}(x;Y)dx\\
&=\text{CNDE}(x'',x';Y)+\text{CNIE}(x'',x';Y)\\
&=\text{CNDE}(x'',x';Y)-\text{CNIE}(x',x'';Y)
\end{align}
holds for any $x', x'' \in \Omega_X$.
From the symmetry, we have $\text{CNIE}(x'',x';Y)=-\text{CNIE}(x',x'';Y)$, then the last equation holds.
\end{proof}

\Theoremfour*


\begin{proof}
This is because LNDE and LNIE are identifiable, for any $x', x'' \in \Omega_X$, as:
\begin{equation}
\begin{aligned}
&\text{LNDE}(x;Y)=\int_{\Omega_{M}} \partial_x \mathbb{E}[Y|X=x,M=m]\mathfrak{p}(M=m|X=x)d{m},
\end{aligned}
\end{equation}
\begin{equation}
\begin{aligned}
&\text{LNIE}(x;Y)=\int_{\Omega_M}\partial_{m}\mathbb{E}[Y|X=x,M=m]\partial_x \mathbb{P}(M>m|X=x)dm.
\end{aligned}
\end{equation}
By substituting it into 
\begin{equation}
\begin{aligned}
&\text{CNDE}(x'',x';Y)\defeq\int_{x'}^{x''} \text{LNDE}(x;Y) dx,\\
&\text{CNIE}(x'',x';Y)\defeq\int_{x'}^{x''}\text{LNIE}(x;Y)dx,
\end{aligned}
\end{equation}
we have the results.
\end{proof}

\Theoremfive*

\begin{proof}
Let $x'=x_i$, $x''=x_j$, and $x'''=x_\ell$.
First suppose $i<j<\ell$. Then, by the definition of CNDE-O,
\begin{align}
\text{CNDE-O}(x''',x';Y)
&=\text{CNDE-O}(x_\ell,x_i;Y)\\
&=\sum_{k=i}^{\ell-1}\text{NDE}(x_{k+1},x_k;Y)\\
&=\sum_{k=i}^{j-1}\text{NDE}(x_{k+1},x_k;Y)+\sum_{k=j}^{\ell-1}\text{NDE}(x_{k+1},x_k;Y)\\
&=\text{CNDE-O}(x_j,x_i;Y)+\text{CNDE-O}(x_\ell,x_j;Y)\\
&=\text{CNDE-O}(x'',x';Y)
+\text{CNDE-O}(x''',x'';Y)
\end{align}
Similarly,
\begin{align}
\text{CNIE-O}(x''',x';Y)
&=\sum_{k=i}^{\ell-1}\text{NIE}(x_{k+1},x_k;Y)\\
&=\sum_{k=i}^{j-1}\text{NIE}(x_{k+1},x_k;Y)+\sum_{k=j}^{\ell-1}\text{NIE}(x_{k+1},x_k;Y)\\
&=\text{CNIE-O}(x'',x';Y)+\text{CNIE-O}(x''',x'';Y).
\end{align}
Next suppose $i>j>\ell$. Then, by the definition of CNDE-O,
\begin{align}
\text{CNDE-O}(x''',x';Y)
&=\text{CNDE-O}(x_\ell,x_i;Y)\\
&=\sum_{k=\ell}^{i-1}\text{NDE}(x_k,x_{k+1};Y)\\
&=\sum_{k=j}^{i-1}\text{NDE}(x_k,x_{k+1};Y)+\sum_{k=\ell}^{j-1}\text{NDE}(x_k,x_{k+1};Y)\\
&=\text{CNDE-O}(x_j,x_i;Y)+\text{CNDE-O}(x_\ell,x_j;Y)\\
&=\text{CNDE-O}(x'',x';Y)+\text{CNDE-O}(x''',x'';Y).
\end{align}
Similarly,
\begin{align}
\text{CNIE-O}(x''',x';Y)
&=\sum_{k=\ell}^{i-1}\text{NIE}(x_k,x_{k+1};Y)\\
&=\sum_{k=j}^{i-1}\text{NIE}(x_k,x_{k+1};Y)+\sum_{k=\ell}^{j-1}\text{NIE}(x_k,x_{k+1};Y)\\
&=\text{CNIE-O}(x'',x';Y)+\text{CNIE-O}(x''',x'';Y).
\end{align}
\end{proof}

\Theoremsix*

\begin{proof}
Let $x'=x_i$ and $x''=x_j$.
If $i<j$, we have 
\begin{align}
&\text{CNDE-O}(x'',x';Y)-\text{CNIE-O}(x',x'';Y)\\
&=\text{CNDE-O}(x_j,x_i;Y)-\text{CNIE-O}(x_i,x_j;Y)\\
&=\sum_{k=i}^{j-1} \text{NDE}(x_{k+1},x_k;Y)-\sum_{k=i}^{j-1} \text{NIE}(x_k,x_{k+1};Y)\\
&=\sum_{k=i}^{j-1}\{\text{NDE}(x_{k+1},x_k;Y)-\text{NIE}(x_k,x_{k+1};Y)\}\\
&=\sum_{k=i}^{j-1}\text{TE}(x_{k+1},x_k;Y)=\text{TE}(x'',x';Y).
\end{align}
If $i>j$, we have 
\begin{equation}
\begin{aligned}
&\text{CNDE-O}(x'',x';Y)-\text{CNIE-O}(x',x'';Y)\\
&=\text{CNDE-O}(x_j,x_i;Y)-\text{CNIE-O}(x_i,x_j;Y)\\
&=\sum_{k=j}^{i-1} \text{NDE}(x_k,x_{k+1};Y)-\sum_{k=j}^{i-1} \text{NIE}(x_{k+1},x_k;Y)\\
&=\sum_{k=j}^{i-1}\{\text{NDE}(x_k,x_{k+1};Y)-\text{NIE}(x_{k+1},x_k;Y)\}\\
&=\sum_{k=j}^{i-1}\text{TE}(x_k,x_{k+1};Y)=\text{TE}(x'',x';Y).
\end{aligned}
\end{equation}
\end{proof}

\Theoremseven*

\begin{proof}
We have 
\begin{align}
&\text{S-NDE}(x'',x';Y)-\text{S-NIE}(x',x'';Y)\\
&= \frac{1}{2}\Big(\text{NDE}(x'',x';Y) - \text{NDE}(x',x'';Y)\Big)-\frac{1}{2}\Big(\text{NIE}(x',x'';Y) - \text{NIE}(x'',x';Y)\Big)\\
&=\frac{1}{2}\Big(\text{NDE}(x'',x';Y) - \text{NIE}(x',x'';Y)\Big)-\frac{1}{2}\Big(\text{NDE}(x',x'';Y) - \text{NIE}(x'',x';Y)\Big)\\
&=\frac{1}{2}\text{TE}(x'',x';Y)-\frac{1}{2}\text{TE}(x',x'';Y)\\
&=\frac{1}{2}\text{TE}(x'',x';Y)+\frac{1}{2}\text{TE}(x'',x';Y)=\text{TE}(x'',x';Y).
\end{align}
\end{proof}

\Theoremeight*

\begin{proof}
Let $x'=x_i$, $x''=x_j$, and $x'''=x_\ell$.
First, we show skew-symmetry. If $i<j$, then 
\begin{align}
&\text{S-CNDE-O}(x_j,x_i;Y)=\sum_{k=i}^{j-1}\text{S-NDE}(x_{k+1},x_k;Y)\\
&=-\sum_{k=i}^{j-1}\text{S-NDE}(x_k,x_{k+1};Y)=-\text{S-CNDE-O}(x_i,x_j;Y),
\end{align}
where the second equality follows from the skew-symmetry of S-NDE. Similarly, 
\begin{align}
&\text{S-CNIE-O}(x_j,x_i;Y)=\sum_{k=i}^{j-1}\text{S-NIE}(x_{k+1},x_k;Y)\\
&=-\sum_{k=i}^{j-1}\text{S-NIE}(x_k,x_{k+1};Y)=-\text{S-CNIE-O}(x_i,x_j;Y).
\end{align}
The case $i>j$ follows analogously.
Next, we show additivity. Suppose $i<j<\ell$. Then 
\begin{align}
&\text{S-CNDE-O}(x_\ell,x_i;Y)=\sum_{k=i}^{\ell-1}\text{S-NDE}(x_{k+1},x_k;Y)\\
&=\sum_{k=i}^{j-1}\text{S-NDE}(x_{k+1},x_k;Y)+\sum_{k=j}^{\ell-1}\text{S-NDE}(x_{k+1},x_k;Y)\\
&=\text{S-CNDE-O}(x_j,x_i;Y)+\text{S-CNDE-O}(x_\ell,x_j;Y).
\end{align}
Similarly, 
\begin{align}
&\text{S-CNIE-O}(x_\ell,x_i;Y)=\sum_{k=i}^{\ell-1}\text{S-NIE}(x_{k+1},x_k;Y)\\
&=\sum_{k=i}^{j-1}\text{S-NIE}(x_{k+1},x_k;Y)+\sum_{k=j}^{\ell-1}\text{S-NIE}(x_{k+1},x_k;Y)\\
&=\text{S-CNIE-O}(x_j,x_i;Y)+\text{S-CNIE-O}(x_\ell,x_j;Y).
\end{align}
Finally, suppose $i>j>\ell$. Then 
\begin{align}
&\text{S-CNDE-O}(x_\ell,x_i;Y)=\sum_{k=\ell}^{i-1}\text{S-NDE}(x_k,x_{k+1};Y)\\
&=\sum_{k=j}^{i-1}\text{S-NDE}(x_k,x_{k+1};Y)+\sum_{k=\ell}^{j-1}\text{S-NDE}(x_k,x_{k+1};Y)\\
&=\text{S-CNDE-O}(x_j,x_i;Y)+\text{S-CNDE-O}(x_\ell,x_j;Y).
\end{align}
Similarly, 
\begin{align}
&\text{S-CNIE-O}(x_\ell,x_i;Y)=\sum_{k=\ell}^{i-1}\text{S-NIE}(x_k,x_{k+1};Y)\\
&=\sum_{k=j}^{i-1}\text{S-NIE}(x_k,x_{k+1};Y)+\sum_{k=\ell}^{j-1}\text{S-NIE}(x_k,x_{k+1};Y)\\
&=\text{S-CNIE-O}(x_j,x_i;Y)+\text{S-CNIE-O}(x_\ell,x_j;Y).
\end{align}
Therefore, S-CNDE-O and S-CNIE-O satisfy skew-symmetry and additivity.
\end{proof}

\Theoremnine*

\begin{proof}
Let $x'=x_i$ and $x''=x_j$. If $i<j$, then
\begin{align}
&\text{S-CNDE-O}(x'',x';Y)-\text{S-CNIE-O}(x',x'';Y)\\
&=\text{S-CNDE-O}(x_j,x_i;Y)-\text{S-CNIE-O}(x_i,x_j;Y)\\
&=\sum_{k=i}^{j-1}\text{S-NDE}(x_{k+1},x_k;Y)-\sum_{k=i}^{j-1}\text{S-NIE}(x_k,x_{k+1};Y)\\
&=\sum_{k=i}^{j-1}\{\text{S-NDE}(x_{k+1},x_k;Y)+\text{S-NIE}(x_{k+1},x_k;Y)\}\\
&=\sum_{k=i}^{j-1}\text{TE}(x_{k+1},x_k;Y)
=\text{TE}(x_j,x_i;Y)=\text{TE}(x'',x';Y).
\end{align}
If $i>j$, then
\begin{align}
&\text{S-CNDE-O}(x'',x';Y)-\text{S-CNIE-O}(x',x'';Y)\\
&=\text{S-CNDE-O}(x_j,x_i;Y)-\text{S-CNIE-O}(x_i,x_j;Y)\\
&=\sum_{k=j}^{i-1}\text{S-NDE}(x_k,x_{k+1};Y)-\sum_{k=j}^{i-1}\text{S-NIE}(x_{k+1},x_k;Y)\\
&=\sum_{k=j}^{i-1}\{\text{S-NDE}(x_k,x_{k+1};Y)+\text{S-NIE}(x_k,x_{k+1};Y)\}\\
&=\sum_{k=j}^{i-1}\text{TE}(x_k,x_{k+1};Y)=\text{TE}(x_j,x_i;Y)=\text{TE}(x'',x';Y).
\end{align}
\end{proof}

\Theoremten*

\begin{proof}
This theorem is directly from Proposition \ref{prop1}.
\end{proof}

\section{Appendix: Covariate Adjustment}
\label{SECB}

\begin{figure}[tb]
\centering
\scalebox{1}{
\begin{tikzpicture}
\node (z) at (0,2) {$M$};

\node (w) at (0,-2) {$W$};
\node (y) at (2,0) {$Y$};
\node (x) at (-2,0) {$X$};

\path (x) edge[->] (y);
    
\path (z) edge[->] (y);
\path (x) edge[->] (z);

\path (w) edge[->] (y);
\path (w) edge[->] (x);

\path (w) edge[->] (z);

\path (x) edge[<->,dotted,bend left] (z);
\path (z) edge[<->,dotted,bend left] (y);
\path (x) edge[<->,dotted,bend right] (y);

\path (w) edge[<->,dotted,bend left] (x);
\path (w) edge[<->,dotted,bend right] (y);
\path (w) edge[<->,dotted,bend left] (z);


\end{tikzpicture}
}
\vspace{-0cm}
\caption{The causal graph representing SCM $\mathcal{M}_{C}$.}
\label{DAGN2}
\end{figure}

We consider the following nonparametric SCM with observed confounder $W$, $\mathcal{M}_{C}$: 
\begin{equation}
\begin{gathered}
  \mathcal{M}_{C}: W:=f_W(\epsilon_W),\ \ 
  X:=f_X(W,\epsilon_X),\ \ 
    M:=f_{M}(X,W,\epsilon_{M}),\ \
    Y:=f_Y(X,M,W,\epsilon_Y),
   \end{gathered}
\end{equation}
where $\epsilon_W$, $\epsilon_X$, $\epsilon_{M}$, and $\epsilon_Y$ are latent exogenous variables. 
Figure \ref{DAGN2} shows the causal graph representing $\mathcal{M}_{C}$, where bidirected edges indicate unmeasured exogenous confounders that affect the variables. 

We assume the conditional version of Assumptions \ref{SCAS2} and \ref{SCAS1}.
\begin{assumption}[Conditional Sequential Ignorability]
\label{Con_SCAS2}
The following two statements of conditional independence hold:
$\{Y_{x,m},M_{x'}\} \indep X|W\text{ and }M_{x'} \indep Y_{x,m}|W$
for any $m \in \Omega_M$ and $x, x' \in \Omega_X$, where $\mathfrak{p}(X=x|W)>0$ and $\mathfrak{p}(M=m|X=x,W)>0$ for any $m \in \Omega_M$ and $x \in \Omega_X$.
\end{assumption}
\begin{assumption}[Conditional Differentiability and Continuity]
\label{Con_SCAS1}
For any $x,x^* \in \Omega_X$, $\mathbb{E}[Y_{x,{M}_{x^*}}|W]$ is partially differentiable w.r.t. $x$ and $x^*$ respectively, and 
$\partial_{x^*} \mathbb{E}[Y_{x,{M}_{x^*}}|W]$ and $\partial_{x^*} \mathbb{E}[Y_{x^*,{M}_x}|W]$ 
are continuous w.r.t. $x$.
\end{assumption}
 
\citet{Imai2010a} present the following proposition concerning covariates:
\begin{proposition}
\label{prop2}
Under SCM $\mathcal{M}_C$ and Assumption \ref{Con_SCAS2}, the conditional expectation of the counterfactual $\mathbb{E}[Y_{x,M_{x'}}|W=w]$ is identifiable by
\begin{equation}
\begin{aligned}
&\mathbb{E}[Y_{x,M_{x'}}|W=w]=\int_{\Omega_{M}}\mathbb{E}[Y|X=x,M=m,W=w]\mathfrak{p}(M=m|X=x',W=w)d{m}
\end{aligned}
\end{equation}
for any $x, x' \in \Omega_X$ and $w \in \Omega_W$.
\end{proposition}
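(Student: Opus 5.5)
The plan is to mirror the proof of Proposition~\ref{prop1}, carrying out every step conditionally on $W=w$, and then to average over $W$ if the unconditional quantity is also wanted. Fix $x,x'\in\Omega_X$ and $w\in\Omega_W$. First, write the nested counterfactual as a mixture over the natural mediator value. Because $Y_{x,M_{x'}}=Y_{x,m}$ on the event $\{M_{x'}=m\}$, the tower property gives
\begin{equation*}
\mathbb{E}[Y_{x,M_{x'}}|W=w]=\int_{\Omega_M}\mathbb{E}[Y_{x,m}|M_{x'}=m,W=w]\,\mathfrak{p}(M_{x'}=m|W=w)\,dm .
\end{equation*}

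Next, use the cross-world part of Assumption~\ref{Con_SCAS2}, namely $M_{x'}\indep Y_{x,m}|W$. This removes the conditioning on $M_{x'}=m$, so the inner expectation becomes $\mathbb{E}[Y_{x,m}|W=w]$. Two identities then remain.

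For the outcome term, the goal is $\mathbb{E}[Y_{x,m}|W=w]=\mathbb{E}[Y|X=x,M=m,W=w]$, obtained in three moves. First, $\{Y_{x,m},M_x\}\indep X|W$ lets us add the condition $X=x$. Second, the joint independence gives $Y_{x,m}\indep M_x\,|\,X=x,W$, so we may further condition on $M_x=m$. Third, consistency in $\mathcal{M}_C$ turns $(X=x,M_x=m)$ into $(X=x,M=m)$ with $Y_{x,m}=Y$ on that event. The positivity conditions in Assumption~\ref{Con_SCAS2} make these conditional quantities well defined.

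For the mediator term, $M_{x'}\indep X|W$ together with consistency gives $\mathfrak{p}(M_{x'}=m|W=w)=\mathfrak{p}(M=m|X=x',W=w)$. Substituting both identities into the integral yields the stated formula.

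The main obstacle is the second move in the outcome term. Assumption~\ref{Con_SCAS2} states $M_{x'}\indep Y_{x,m}|W$ without conditioning on $X$, so I must derive $Y_{x,m}\indep M_x\,|\,X,W$. This requires reading $\{Y_{x,m},M_{x'}\}\indep X|W$ as a joint independence of the pair from $X$ and then combining it with $M_{x'}\indep Y_{x,m}|W$ via the contraction/weak union properties of conditional independence. That is the standard argument of \citet{Imai2010a}, which I would follow. The rest is routine bookkeeping with densities, with integrals over $\Omega_M$ replacing sums.
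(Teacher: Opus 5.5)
Your proof is correct and is essentially the standard sequential-ignorability derivation of Imai et al. The paper gives no separate proof of this proposition; it states the result by citation to that work, and your argument reconstructs it. The one step that needs care under the paper's phrasing of Assumption~\ref{Con_SCAS2} is deriving $Y_{x,m}\indep M_x\mid X,W$ from $\{Y_{x,m},M_{x'}\}\indep X\mid W$ together with $M_{x'}\indep Y_{x,m}\mid W$, and you handle it correctly via weak union and contraction.
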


We obtain the covariate-adjusted versions of Theorems \ref{cor3} and \ref{theo4}.
\begin{theorem}[Identification of LNDE and LNIE]
\label{Con_cor3}
Under SCM $\mathcal{M}_C$ and Assumptions \ref{Con_SCAS2} and \ref{Con_SCAS1}, 
if the following regularity condition holds: either
 (i) $\mathbb{E}[Y|X=x,M=m,W=w]$ is bounded and $(M,W)$ has bounded support; or
(ii) $\mathbb{E}[Y|X=x, M=m,W=w]$ grows at most polynomially in $(m,w)$, and the exogenous error $\varepsilon_M$ in $\mathcal{M}_N$ 
has a continuously differentiable density $\mathfrak{p}$ such that $\mathfrak{p}'(m)$ is integrable and $\mathfrak{p}(m) \to 0$
 as $|m| \to \infty$ faster than any polynomial (e.g., Gaussian),
 we can identify LNDE and LNIE, for any $x \in \Omega_X$, as:
\begin{equation}
\begin{aligned}
&\text{LNDE}(x;Y)=\int_{\Omega_{W}} \int_{\Omega_{M}} \partial_x \mathbb{E}[Y|X=x,M=m,W=w]\\
&\times\mathfrak{p}(M=m|X=x,W=w)\mathfrak{p}(W=w)d{m}dw,
\end{aligned}
\end{equation}
\begin{equation}
\begin{aligned}
&\text{LNIE}(x;Y)=\int_{\Omega_{W}} \int_{\Omega_M}\partial_{m}\mathbb{E}[Y|X=x,M=m,W=w]\\
&\times\partial_x \mathbb{P}(M>m|X=x,W=w)\mathfrak{p}(W=w)d{m}dw
\end{aligned}
\end{equation}
\end{theorem}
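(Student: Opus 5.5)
The plan is to reduce the covariate-adjusted statement to the unconditional Theorem~\ref{cor3}, applied stratum by stratum in $W$, and then average over $W$. Fix $w\in\Omega_W$. Assumption~\ref{Con_SCAS2} restricted to $W=w$ is exactly Assumption~\ref{SCAS2} for the conditional law given $W=w$, and Assumption~\ref{Con_SCAS1} is Assumption~\ref{SCAS1} for $\mathbb{E}[\,\cdot\,|W=w]$. Proposition~\ref{prop2} then plays the role of Proposition~\ref{prop1}.

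\textbf{Step 1: conditional local effects.} Define the stratum-specific quantities
\[
\text{LNDE}(x;Y|w)=\partial_{x^*}\mathbb{E}[Y_{x^*,M_x}|W=w]|_{x^*=x},\qquad
\text{LNIE}(x;Y|w)=\partial_{x^*}\mathbb{E}[Y_{x,M_{x^*}}|W=w]|_{x^*=x}.
\]
Rerun the proof of Theorem~\ref{cor3} conditionally on $W=w$, in two parts.

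For the indirect part, write $\mathbb{E}[Y_{x,M_{x+\delta}}|w]-\mathbb{E}[Y_{x,M_x}|w]$ as the integral of $\mathbb{E}[Y_{x,m}|w]$ against the difference of the conditional densities of $M_{x+\delta}$ and $M_x$. Integrate by parts in $m$ to convert this into an integral against the difference of conditional survival functions. Divide by $\delta$ and let $\delta\to 0$. Finally replace $\mathbb{E}[Y_{x,m}|w]$ by $\mathbb{E}[Y|X=x,M=m,W=w]$ and $\mathbb{P}(M_x>m|w)$ by $\mathbb{P}(M>m|X=x,W=w)$ using conditional sequential ignorability.

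For the direct part, apply Proposition~\ref{prop2} to write $\mathbb{E}[Y_{x^*,M_x}|w]=\int \mathbb{E}[Y|X=x^*,M=m,W=w]\,\mathfrak{p}(m|x,w)\,dm$, then differentiate in $x^*$ at $x^*=x$.

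\textbf{Step 2: marginalize.} Since $\mathbb{E}[Y_{x,M_{x^*}}]=\int \mathbb{E}[Y_{x,M_{x^*}}|W=w]\,\mathfrak{p}(w)\,dw$ and $\mathfrak{p}(w)$ does not depend on $x$ or $x^*$ ($W$ is pre-treatment in $\mathcal{M}_C$), I would pass $\partial_{x^*}$ under the $w$-integral. This gives $\text{LNDE}(x;Y)=\int \text{LNDE}(x;Y|w)\,\mathfrak{p}(w)\,dw$, and likewise for LNIE. Substituting the Step~1 expressions yields the two displayed formulas.

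\textbf{Main obstacle.} The main difficulty is justifying the limit interchanges: the boundary terms vanishing in the integration by parts, the $\delta\to0$ limit inside the $m$-integral, and differentiation under both the $m$- and $w$-integrals.

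Under regularity (i), bounded $\mathbb{E}[Y|x,m,w]$ and bounded support of $(M,W)$ make the boundary terms vanish, since survival functions are $0$ or $1$ at the ends of the support. Dominated convergence then applies on a compact domain.

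Under (ii), polynomial growth in $(m,w)$ against the superpolynomially decaying, integrable-derivative density of $\varepsilon_M$ makes the boundary terms vanish and gives an integrable dominating function, uniform for $\delta$ near $0$. I would first establish a uniform-in-$w$ dominating bound. If polynomial growth in $w$ causes trouble, I would additionally require $W$ to have finite moments of the relevant order so that the outer $w$-integral is dominated too.
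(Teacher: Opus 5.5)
Your proposal is correct and follows the route the paper intends. The paper gives no separate proof of Theorem~\ref{Con_cor3}; it presents it as the covariate-adjusted analogue of Theorem~\ref{cor3}, meaning that argument (density representation of $M_{x^*}$, integration by parts in $m$, the limit $\delta\to 0$, and Proposition~\ref{prop2} for the direct part) is rerun within each stratum $W=w$ and then averaged against $\mathfrak{p}(w)$, which is exactly your plan. The limit and differentiation interchanges you flag, including your optional moment condition on $W$ under (ii), are handled in the paper only by the phrase ``assuming the regularity conditions hold,'' so your treatment is at the same level of rigor as the original.
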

\begin{theorem}[Identification of CNDE and CNIE]
\label{Con_theo4}
Under SCM $\mathcal{M}_{C}$ and Assumptions \ref{Con_SCAS2} and \ref{Con_SCAS1}, CNDE and CNIE are identifiable, for any $x', x'' \in \Omega_X$, as:
\begin{equation}
\begin{aligned}
&\text{CNDE}(x'',x';Y)= \int_{x'}^{x''}\int_{\Omega_{W}}\int_{\Omega_{M}}\partial_x \mathbb{E}[Y|X=x,M=m,W=w]\\
&\times\mathfrak{p}(M=m|X=x,W=w)\mathfrak{p}(W=w)d{m}dwdx,\\
&\text{CNIE}(x'',x';Y)=\int_{x'}^{x''}\int_{\Omega_{W}} \int_{\Omega_M}\partial_{m}\mathbb{E}[Y|X=x,M=m,W=w]\\
&\times\partial_x \mathbb{P}(M>m|X=x,W=w)\mathfrak{p}(W=w)d{m}dwdx.
\end{aligned}
\end{equation}
\end{theorem}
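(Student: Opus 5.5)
The statement to prove is the covariate-adjusted identification of CNDE and CNIE (Theorem~\ref{Con_theo4}). I will argue exactly as in the proof of Theorem~\ref{theo4}. CNDE and CNIE are defined as integrals over $[x',x'']$ of LNDE and LNIE. So it suffices to identify the integrands pointwise in $x$, which is the content of Theorem~\ref{Con_cor3}, and then substitute. The real work is therefore in establishing Theorem~\ref{Con_cor3}. I will do this by conditioning on $W$ and then averaging over $W$.

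\textbf{Step 1: reduce to conditional local effects.} Fix $w$. By iterated expectations,
\[
\mathbb{E}[Y_{x,M_{x^*}}]=\int_{\Omega_W}\mathbb{E}[Y_{x,M_{x^*}}|W=w]\,\mathfrak{p}(W=w)\,dw .
\]
I then differentiate with respect to $x^*$ (or $x$) at $x^*=x$, and exchange the derivative with the $w$-integral. This exchange needs a dominated-convergence justification: by Assumption~\ref{Con_SCAS1} together with regularity condition (i) or (ii), the difference quotients are dominated by an integrable function of $w$. This gives LNDE and LNIE as $\mathfrak{p}(W)$-averages of the conditional local effects $\partial_{x^*}\mathbb{E}[Y_{x^*,M_x}|W=w]|_{x^*=x}$ and $\partial_{x^*}\mathbb{E}[Y_{x,M_{x^*}}|W=w]|_{x^*=x}$.

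\textbf{Step 2: identify the conditional pieces.} Within the stratum $W=w$, Proposition~\ref{prop2} gives
\[
\mathbb{E}[Y_{x^*,M_x}|W=w]=\int_{\Omega_M}\mathbb{E}[Y|X=x^*,M=m,W=w]\,\mathfrak{p}(M=m|X=x,W=w)\,dm .
\]
\emph{Direct part.} Differentiating in $x^*$ under the integral sign yields the LNDE integrand.

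\emph{Indirect part.} I start from $\int \mathbb{E}[Y|X=x,M=m,W=w]\,\mathfrak{p}(M=m|X=x^*,W=w)\,dm$ and repeat the integration-by-parts step from the proof of Theorem~\ref{cor3}, conditionally on $W=w$. This rewrites the density difference as a difference of survival functions $\mathbb{P}(M>m|X=\cdot,W=w)$ multiplied by $\partial_m\mathbb{E}[Y|X=x,M=m,W=w]$. The boundary terms vanish under (i) by bounded support, or under (ii) by the polynomial-growth versus super-polynomial-decay condition. Dividing by the increment and letting it go to $0$ gives the LNIE integrand.

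\textbf{Step 3: integrate in $x$.} Plugging these expressions into $\int_{x'}^{x''}\text{LNDE}(x;Y)\,dx$ and $\int_{x'}^{x''}\text{LNIE}(x;Y)\,dx$ gives the displayed formulas of Theorem~\ref{Con_theo4}. Measurability and integrability in $x$ follow from the continuity in Assumption~\ref{Con_SCAS1}.

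\textbf{Main obstacle.} The delicate step is the limit interchanges in Step 2: moving $\partial_x$ inside the $m$- and $w$-integrals, and taking $\delta\to0$ after integration by parts. The statement of Theorem~\ref{Con_theo4} does not repeat the regularity conditions, so I will assume the regularity conditions of Theorem~\ref{Con_cor3} are in force, as is implicitly done for Theorem~\ref{theo4}. I will then invoke dominated convergence, using the bounds those conditions provide uniformly for $x$ in a neighborhood.
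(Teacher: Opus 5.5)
Your proposal is correct and takes the route the paper intends. The paper gives no separate proof of Theorem~\ref{Con_theo4}; the argument is exactly that of Theorem~\ref{theo4}, substituting the identified LNDE and LNIE from Theorem~\ref{Con_cor3} into the integral definitions of CNDE and CNIE, where Theorem~\ref{Con_cor3} is the proof of Theorem~\ref{cor3} run within each stratum $W=w$ via Proposition~\ref{prop2} and then averaged over $\mathfrak{p}(W=w)$. You are right that the regularity conditions of Theorem~\ref{Con_cor3} must be carried along implicitly. However, conditions (i) and (ii) are tailored to the boundary terms of the integration by parts. They do not by themselves supply a dominating function for moving the derivative or limit inside the $m$- and $w$-integrals, so that domination is an extra standard assumption, which the paper also leaves implicit.
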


\section{Appendix: Supplemental Information for the Application}
\label{SECC}
We provide supplemental information for the application presented in the main body of the paper.

{\bf Details of Estimation Method.}
We denote the samples as \begin{equation}
\mathcal{D}=\{(Y_1,X_1,M_1,W_1),\dots,(Y_N,X_N,M_N,W_N)\},    
\end{equation}
whose sample size is $N$.
In the main analysis, we set $W = \emptyset$.
Therefore, we assume a nonlinear model with Gaussian noise, i.e., $Y|X=x,M=m,W=w \sim \mathcal{N}(\theta(x,m,w),\sigma_Y^2)$ and $M|X=x,W=w \sim \mathcal{N}(\varphi(x,w),\sigma_M^2)$.

{\bf Step 1.}
We regress $Y$ on $X$, $M$, and $W$ to estimate $\theta$, and regress $M$ on $X$ and $W$ to estimate $\varphi$, using local linear regressions (with interaction term $XM$) with bandwidth $h$ \citep{Li2007}.
Local linear regression is a nonparametric smoothing method that fits linear models locally around each target point using kernel weights to flexibly estimate conditional relationships without assuming a specific functional form.
We estimate $\sigma_M$ from the residuals of the regression of $M$ on $X$ and $W$.
To obtain stable estimates, we choose a bandwidth four times larger than the cross-validated optimum Silverman rule-of-thumb bandwidth \citep{Silverman2018}.
Because our objective is to approximate the distribution of potential outcomes, we prioritize estimation stability over predictive accuracy for the observed outcomes.

{\bf Step 2.}
We generate $\{\tilde{x}_1,\dots,\tilde{x}_K\}$ from a uniform distribution $\text{Unif}(x',x'')$ to approximate $\int_{x'}^{x''}\cdot\, dx$ using Monte Carlo integration.
In our simulation, we set $K = 30$.
We generate $\{\tilde{m}^{k,i}_1,\dots,\tilde{m}^{k,i}_D\}$ from $\mathcal{N}(\hat{\varphi}_h(\tilde{x}_k,W_i),\hat{\sigma}_M^2)$, where $\hat{\varphi}_h(x,w)$ and $\hat{\sigma}_M^2$ are the estimates obtained in Step 1.
In our simulation, we set $D$ as the sample size.
The estimate of CNDE is given as
\begin{equation}
\sum_{i=1}^N\sum_{k=1}^{K}\sum_{d_k=1}^{D} \frac{1}{K}\frac{1}{D}\frac{1}{N}(x''-x')\partial_x \hat{\theta}_h(\tilde{x}_k,\tilde{m}^{k,i}_{d_k},W_i),
\end{equation}
where $\hat{\theta}_h$ is the estimate obtained in Step 1.
We compute $\partial_x\hat{\theta}_h$ using a numerical derivative with a step size of 5.
The estimator is consistent as $N, D, K \to \infty$ and the bandwidth $h \to 0$ with $Nh \to \infty$, under regularity conditions ensuring uniform convergence of the local linear estimators and convergence of the Monte Carlo approximation to the corresponding integrals.
The computational complexity is $\mathcal{O}(NKD + N^2)$, where the first term corresponds to the Monte Carlo integration in Step 2 and the second term arises from the local linear regressions in Step 1.
Other quantities are obtained in a similar way.


\end{appendices}






\bibliographystyle{plainnat}
\bibliography{example}

@book{Pearl09,
 author = {Judea Pearl},
 title = {Causality: Models, Reasoning and Inference},
 publisher = {Cambridge University Press}, 
 year = {2009},
 edition = {2nd}
}

@INCOLLECTION{Chamberlain1984,
title = {Panel data},
author = {Chamberlain, Gary},
year = {1984},
chapter = {22},
pages = {1247-1318},
booktitle = {Handbook of Econometrics},
editor = {Griliches†, Z. and Intriligator, M. D.},
volume = {2},
edition = {1},
publisher = {Elsevier},
url = {https://EconPapers.repec.org/RePEc:eee:ecochp:2-22}
}

@inProceedings{Kawakami2023,
  title = 	 {Instrumental Variable Estimation of Average Partial Causal Effects},
  author =       {Kawakami, Yuta and Kuroki, Manabu and Tian, Jin},
  booktitle = 	 {Proceedings of the 40th International Conference on Machine Learning},
  pages = 	 {16097--16130},
  year = 	 {2023},
  volume = 	 {202},
  series = 	 {Proceedings of Machine Learning Research},
  month = 	 {23--29 Jul},
  publisher =    {PMLR},
}

@inproceedings{Pearl2001,
author = {Pearl, Judea},
title = {Direct and Indirect Effects},
year = {2001},
isbn = {1558608001},
publisher = {Morgan Kaufmann Publishers Inc.},
address = {San Francisco, CA, USA},
booktitle = {Proceedings of the Seventeenth Conference on Uncertainty in Artificial Intelligence},
pages = {411–420},
numpages = {10},
location = {Seattle, Washington},
series = {UAI'01}
}

@article{Imai2010a,
	author = {Imai, Kosuke and Keele, Luke and Tingley, Dustin},
	journal = {Psychol Methods},
	month = {Dec},
	number = {4},
	pages = {309--334},
	title = {A general approach to causal mediation analysis.},
	volume = {15},
	year = {2010}}

@article{Imai2010b,
author = {Kosuke Imai and Luke Keele and Teppei Yamamoto},
title = {{Identification, Inference and Sensitivity Analysis for Causal Mediation Effects}},
volume = {25},
journal = {Statistical Science},
number = {1},
publisher = {Institute of Mathematical Statistics},
pages = {51 -- 71},
year = {2010},
doi = {10.1214/10-STS321},
URL = {https://doi.org/10.1214/10-STS321}
}

@article{Robins1992,
	author = {Robins, J M and Greenland, S},
	journal = {Epidemiology},
	month = {Mar},
	number = {2},
	pages = {143--155},
	title = {Identifiability and exchangeability for direct and indirect effects.},
	volume = {3},
	year = {1992}}

@article{Baron1986,
  title={The moderator--mediator variable distinction in social psychological research: Conceptual, strategic, and statistical considerations.},
  author={Baron, Reuben M and Kenny, David A},
  journal={Journal of personality and social psychology},
  volume={51},
  number={6},
  pages={1173},
  year={1986},
  publisher={American Psychological Association}
}

@article{Wright1934,
  title={The method of path coefficients},
  author={Wright, Sewall},
  journal={The annals of mathematical statistics},
  volume={5},
  number={3},
  pages={161--215},
  year={1934},
  publisher={JSTOR}
}

@inproceedings{
Avin2005,
author = {Avin, Chen and Shpitser, Ilya and Pearl, Judea},
title = {Identifiability of path-specific effects},
year = {2005},
publisher = {Morgan Kaufmann Publishers Inc.},
address = {San Francisco, CA, USA},
booktitle = {Proceedings of the 19th International Joint Conference on Artificial Intelligence},
pages = {357–363},
numpages = {7},
location = {Edinburgh, Scotland},
series = {IJCAI'05}
}

@article{Daniel2015,
	author = {Daniel, R M and De Stavola, B L and Cousens, S N and Vansteelandt, S},
	journal = {Biometrics},
	month = {Mar},
	number = {1},
	pages = {1--14},
	title = {Causal mediation analysis with multiple mediators.},
	volume = {71},
	year = {2015}}

@article{Vanderweele2009,
  title={Conceptual issues concerning mediation, interventions and composition},
  author={VanderWeele, Tyler J and Vansteelandt, Stijn},
  journal={Statistics and its Interface},
  volume={2},
  number={4},
  pages={457--468},
  year={2009},
  publisher={International Press of Boston}
}

@article{Tchetgen2012,
author = {Eric J. Tchetgen Tchetgen and Ilya Shpitser},
title = {{Semiparametric theory for causal mediation analysis: Efficiency bounds, multiple robustness and sensitivity analysis}},
volume = {40},
journal = {The Annals of Statistics},
number = {3},
publisher = {Institute of Mathematical Statistics},
pages = {1816 -- 1845},
year = {2012},
doi = {10.1214/12-AOS990},
URL = {https://doi.org/10.1214/12-AOS990}
}

@article{VanderWeele2013b,
	author = {VanderWeele, Tyler J},
	journal = {Epidemiology},
	month = {Jan},
	number = {1},
	pages = {175--176},
	title = {Policy-relevant proportions for direct effects.},
	volume = {24},
	year = {2013}}

@article{Knafl2017,
	author = {Knafl, George J. and Knafl, Kathleen A. and Grey, Margaret and Dixon, Jane and Deatrick, Janet A. and Gallo, Agatha M.},
	journal = {BMC Medical Research Methodology},
	number = {1},
	pages = {45},
	title = {Incorporating nonlinearity into mediation analyses},
	volume = {17},
	year = {2017}}

@article{Robins2003,
  title={Semantics of causal DAG models and the identification of direct and indirect effects},
  author={Robins, James M},
  journal={Highly structured stochastic systems},
  pages={70--82},
  year={2003},
  publisher={Oxford University PressOxford}
}

@book{Vanderweele2015,
  title={Explanation in causal inference: methods for mediation and interaction},
  author={VanderWeele, Tyler},
  year={2015},
  publisher={Oxford University Press}
}

@article{Wright1921,
  author = {Wright, Sewall},
  journal = {Journal of agricultural research},
  number = 7,
  pages = {557--585},
  publisher = {Washington},
  title = {Correlation and causation},
  volume = 20,
  year = 1921
}

@article{Ding2016,
	author = {Ding, Peng and Vanderweele, Tyler J},
	journal = {Biometrika},
	month = {Jun},
	number = {2},
	pages = {483--490},
	title = {Sharp sensitivity bounds for mediation under unmeasured mediator-outcome confounding.},
	volume = {103},
	year = {2016}}

@article{TchetgenTchetgen2012,
author = {Eric J. Tchetgen Tchetgen and Ilya Shpitser},
title = {{Semiparametric theory for causal mediation analysis: Efficiency bounds, multiple robustness and sensitivity analysis}},
volume = {40},
journal = {The Annals of Statistics},
number = {3},
publisher = {Institute of Mathematical Statistics},
pages = {1816 -- 1845},
year = {2012},
doi = {10.1214/12-AOS990},
URL = {https://doi.org/10.1214/12-AOS990}
}

@book{Hernan2023,
	author = {Hern\'{a}n, M.A. and Robins, J.M.},
	publisher = {CRC Press},
	series = {Chapman \& Hall/CRC Monographs on Statistics \& Applied Probab},
	title = {Causal Inference: What If},
	year = {2023}}

@article{Xia2022,
	author = {Xia, Fan and Chan, Kwun Chuen Gary},
	journal = {Biometrika},
	month = {02},
	number = {4},
	pages = {1085-1100},
	title = {{Decomposition, identification and multiply robust estimation of natural mediation effects with multiple mediators}},
	volume = {109},
	year = {2022}}

@article{Miles2019,
	author = {Miles, C H and Shpitser, I and Kanki, P and Meloni, S and Tchetgen Tchetgen, E J},
	journal = {Biometrika},
	month = {11},
	number = {1},
	pages = {159-172},
	title = {{On semiparametric estimation of a path-specific effect in the presence of mediator-outcome confounding}},
	volume = {107},
	year = {2019}}

@book{Abraham2012,
  title={Manifolds, tensor analysis, and applications},
  author={Abraham, Ralph and Marsden, Jerrold E and Ratiu, Tudor},
  volume={75},
  year={2012},
  publisher={Springer Science \& Business Media}
}

@article{Zhou2023,
	author = {Zhou, Xiang and Yamamoto, Teppei},
	journal = {The Journal of Politics},
	number = {1},
	pages = {250-265},
	title = {Tracing Causal Paths from Experimental and Observational Data},
	volume = {85},
	year = {2023}}

@article{Shpitser2013,
	author = {Shpitser, Ilya},
	journal = {Cognitive Science},
	number = {6},
	pages = {1011-1035},
	title = {Counterfactual Graphical Models for Longitudinal Mediation Analysis With Unobserved Confounding},
	volume = {37},
	year = {2013}}

@article{Shpitser2008,
author = {Shpitser, Ilya and Pearl, Judea},
title = {Complete Identification Methods for the Causal Hierarchy},
year = {2008},
issue_date = {6/1/2008},
publisher = {JMLR.org},
volume = {9},
journal = {J. Mach. Learn. Res.},
month = {jun},
pages = {1941–1979},
numpages = {39}
}

@inproceedings{Malinsky2019,
  title={A potential outcomes calculus for identifying conditional path-specific effects},
  author={Malinsky, Daniel and Shpitser, Ilya and Richardson, Thomas},
  booktitle={The 22nd International Conference on Artificial Intelligence and Statistics},
  pages={3080--3088},
  year={2019},
  organization={PMLR}
}

@article{Efron1979,
	author = {B. Efron},
	journal = {The Annals of Statistics},
	number = {1},
	pages = {1 -- 26},
	title = {{Bootstrap Methods: Another Look at the Jackknife}},
	volume = {7},
	year = {1979}}

@book{Li2007,
  title={Nonparametric econometrics: theory and practice},
  author={Li, Qi and Racine, Jeffrey Scott},
  year={2007},
  publisher={Princeton University Press}
}

@article{Wang2016,
  title={Assessing natural direct and indirect effects for a continuous exposure and a dichotomous outcome},
  author={Wang, Wei and Zhang, Bo},
  journal={Journal of statistical theory and practice},
  volume={10},
  number={3},
  pages={574--587},
  year={2016},
  publisher={Springer}
}

@book{Silverman2018,
  title={Density estimation for statistics and data analysis},
  author={Silverman, Bernard W},
  year={2018},
  publisher={Routledge}
}

@article{Xu2021,
  title={Multiply robust causal mediation analysis with continuous treatments},
  author={Xu, Yizhen and Sani, Numair and Ghassami, AmirEmad and Shpitser, Ilya},
  journal={arXiv preprint arXiv:2105.09254},
  year={2021}
}

@article{Zhang2025,
  title={Doubly robust inference on causal derivative effects for continuous treatments},
  author={Zhang, Yikun and Chen, Yen-Chi},
  journal={arXiv preprint arXiv:2501.06969},
  year={2025}
}

@article{Huber2020,
  title={Direct and indirect effects of continuous treatments based on generalized propensity score weighting},
  author={Huber, Martin and Hsu, Yu-Chin and Lee, Ying-Ying and Lettry, Layal},
  journal={Journal of Applied Econometrics},
  volume={35},
  number={7},
  pages={814--840},
  year={2020},
  publisher={Wiley Online Library}
}

@article{Huang2024,
  title={Nonparametric estimation of mediation effects with a general treatment},
  author={Huang, Lukang and Huang, Wei and Linton, Oliver and Zhang, Zheng},
  journal={Econometric Reviews},
  volume={43},
  number={2-4},
  pages={215--237},
  year={2024},
  publisher={Taylor \& Francis}
}

@article{Dukes2023,
  title={Proximal mediation analysis},
  author={Dukes, Oliver and Shpitser, Ilya and Tchetgen Tchetgen, Eric J},
  journal={Biometrika},
  volume={110},
  number={4},
  pages={973--987},
  year={2023},
  publisher={Oxford University Press}
}

@article{Rudolph2021,
  title={Causal mediation with instrumental variables},
  author={Rudolph, Kara E and Williams, Nicholas and Diaz, Ivan},
  journal={arXiv preprint arXiv:2112.13898},
  year={2021}
}

@article{Huber2026,
  title={Difference-in-differences for mediation analysis using double machine learning},
  author={Huber, Martin and Oberh{\"a}nsli, Sarina Joy},
  journal={arXiv preprint arXiv:2602.23877},
  year={2026}
}

\end{document}